\documentclass[twoside,11pt]{entics}
\usepackage{enticsmacro}

\newcommand{\acknowledgements}[1]{%
  \par\vspace{2ex}\noindent\textit{Acknowledgements.} #1\par
}

\newcommand{\funding}[1]{%
  \par\vspace{1ex}\noindent\textit{Funding.} #1\par
}
\usepackage{comment}
\usepackage[all,2cell]{xy}

\usepackage{amsmath,amssymb}

\usepackage{cleveref}

\usepackage{graphicx}
\usepackage{multirow}
\usepackage{amsfonts}
\usepackage{mathrsfs}
\usepackage{xcolor}
\usepackage{manyfoot}
\usepackage{booktabs}
\usepackage{algorithm}
\usepackage{algorithmicx} 
\usepackage{algpseudocode}
\usepackage{listings}

\usepackage{quiver}
\usepackage{mathtools}
\usepackage{todonotes}

\def\comp{;}
\newcommand{\cobang}{\mathord{\vcenter{\hbox{\rotatebox[origin=c]{180}{\normalfont!}}}}}

\newcommand{\freccia}[3]{#2 \colon #1  \to #3}

\newcommand{\duefreccia}[3]{\xymatrix@C=0.5cm{#2 \colon #1  \ar@{=>}[r] &  #3}}

\newcommand{\comsquare}[8]{ \xymatrix@+1pc{ 
#1 \ar[r]^{#5} \ar[d]_{#6} & #2 \ar[d]^{#7} \\
#3 \ar[r]_{#8} & #4 
}}
\newcommand{\pullback}[8]{ \xymatrix@+1pc{ 
#1 \pullbackcorner \ar[r]^{#5} \ar[d]_{#6} & #2 \ar[d]^{#7} \\
#3 \ar[r]_{#8} & #4 
}}
\newcommand{\quadratocomm}[8]{ \xymatrix@+1pc{ 
#1 \ar[r]^{#5} \ar[d]_{#6} & #2 \ar[d]^{#7} \\
#3 \ar[r]_{#8} & #4 
}}
\newcommand{\comsquarelargo}[8]{ \xymatrix@+1pc{ 
#1 \ar[rr]^{#5} \ar[d]_{#6} && #2 \ar[d]^{#7} \\
#3 \ar[rr]_{#8} && #4 
}}
\newcommand{\parallelmorphisms}[4]{\xymatrix@+1pc{
#1 \ar @<+4pt>[r]^{#2} \ar @<-4pt>[r]_{#3} & #4
}}
\newcommand{\relation}[4]{\xymatrix@+1pc{
\angbr{#2}{#3}\colon #1 \ar @<+4pt>[r] \ar @<-4pt>[r] & #4
}}
\newcommand{\frecceparalleleopposte}[4]{\xymatrix@+1pc{
#1 \ar@<+4pt>[r]^{#2} \ar@<-4pt>@{<-}[r]_{#3} & #4
}}
\newcommand{\equalizer}[6]{\xymatrix@+1pc{
#1 \ar[r]^{#2} & #3 \ar @<+4pt>[r]^{#4} \ar @<-4pt>[r]_{#5} & #6
}}
\newcommand{\coequalizer}[6]{\xymatrix@+1pc{
 #1 \ar @<+4pt>[r]^{#2} \ar @<-4pt>[r]_{#3} & #4 \ar[r]^{#5} & #6
}}

\newcommand{\subobject}[3]{\xymatrix{
#1 \ar@{>->}[r]^{#2} & #3
}}

\newcommand{\pullbackcorner}[1][ul]{\save*!/#1+1.2pc/#1:(1,-1)@^{|-}\restore}


\def\mC{\mathcal{C}}

\def\mD{\mathcal{D}}


\def\Rel{\mathbf{Rel}}



\def\id{\operatorname{ id}}         
\def\dom{\operatorname{ dom}}      
\def\mass{\operatorname{ mass}}      


\newcommand{\angbr}[2]{\langle #1,#2 \rangle}


\usepackage{tikz}
\usepackage{tikzit}

\usetikzlibrary{shapes}

\usepackage{freetikz}
\usetikzlibrary{decorations.markings,positioning,patterns}
\usetikzlibrary{shadows}
\usepackage{array}
\usepackage{tikz-cd}
\usepackage{circuitikz}


\tikzstyle{nodonero}=[fill=black, draw=black, shape=circle]
\tikzstyle{box}=[fill=white, draw=black, shape=rectangle]
\tikzstyle{medium box}=[fill=white, draw=black, shape=rectangle, minimum width=0.7cm, minimum height=0.7cm]
\tikzstyle{bn}=[fill=black, draw=black, shape=circle, inner sep=1.5pt]
\tikzstyle{state}=[fill=white, draw=black, regular polygon, regular polygon sides=3, minimum width=0.8cm, shape border rotate=180, inner sep=0pt]
\tikzstyle{costate}=[fill=white, draw=black, regular polygon, regular polygon sides=3, minimum width=0.8cm, inner sep=0pt]
\tikzstyle{comp}=[fill={rgb,255: red,191; green,0; blue,64}, draw={rgb,255: red,191; green,0; blue,64}, shape=circle, inner sep=1.5pt]

\tikzstyle{ds}=[-, dashed, dash pattern=on 1mm off 1mm]

\usepackage[all,2cell]{xy}
\UseAllTwocells
\xyoption{v2}

%
\volume{NN}			

\begin{document}

\begin{frontmatter}

\title{Between Markov and restriction.\\ Two more monads on categories for relations.}

\author{Cipriano Junior Cioffo}
\address{Department of Computer Science, University of Pisa}

\author{Fabio Gadducci}
\address{Department of Computer Science, University of Pisa}

\author{Davide Trotta}
\address{Department of Mathematics, University of Padova}

\begin{abstract}
The study of categories abstracting the structural properties of relations has been extensively developed over the years, resulting in a rich and diverse body of work.
In a previous paper we offered a survey providing a modern presentation of these ``categories for relations'' as instances of gs-monoidal categories,
showing how they arise as Kleisli categories of suitable symmetric monoidal monads. 
The end result was a taxonomy that organised numerous related concepts in the literature, including in particular 
Markov and restriction categories.
This paper further  enriches the taxonomy: it proposes two categories that are once more instances of gs-monoidal categories,
yet more abstract than Markov and restriction categories. They are characterised by an axiomatic notion of mass and domain of an arrow,
 the latter one of the key ingredients of restriction categories, which generalises the domain of partial functions. 
The paper then introduces mass and domain preserving monads, proving that the associated Kleisli categories in fact preserve the corresponding equations
and that these monads arise naturally for the categories of semiring-weighted relations.
\end{abstract}

\begin{keyword}
String diagrams, categories for relations, gs-monoidal categories, restriction categories, Markov categories, semiring-weighted monads
\end{keyword}

\end{frontmatter}

\acknowledgements{
We are indebted to Tobias Fritz for the careful reading of a preliminary version of this paper and for sharing relevant pointers to the literature.}

\funding{This research was partly funded by the Advanced Research + Invention Agency (ARIA) Safeguarded AI Programme
and by the EU through the MSCA SE project QCOMICAL (Grant ID 101182520).}

\section{Introduction}
In recent years, the study of categories abstracting the properties of relations has been extensively developed both in mathematics and computer science, making it difficult
to identify a single basic notion that captures all the relevant aspects of these categories. However, a large part of the literature is based on the notion of symmetric monoidal category, which abstracts operations such as the product of relations, a leading example of a monoidal product that is not cartesian.

\vspace{.1cm}
\noindent
\begin{minipage}[l]{.45\linewidth}
\hspace{.4cm}
 Taking this fact as a starting point, in~\cite{cioffogadduccitrottataxonomy} we proposed a taxonomy for four families of (possibly order-enriched) categories,  
spanning from symmetric monoidal to cartesian monoidal categories: a fragment of such taxonomy, including only those categories that we will consider 
in this paper, is illustrated on the right.
\end{minipage}
\begin{minipage}[r]{.20\linewidth}
\begin{tikzcd}
	& {\text{GS-monoidal}} \\
	{\text{Markov}} && {\hspace{-1cm}\text{Cartesian restriction}} \\
	& {\text{Cartesian monoidal}}
	\arrow[from=3-2, to=2-1]
	\arrow[from=3-2, to=2-3]
	\arrow[from=2-1, to=1-2]
	\arrow[from=2-3, to=1-2]
\end{tikzcd}
\end{minipage}
\vspace{.1cm}

Also known as copy/discard categories (shortly, CD categories)~\cite{cho_jacobs_2019}, \emph{garbage/share-monoidal categories} (shortly, gs-monoidal or GSM categories)~\cite{gadducci1996,CorradiniGadducci99,FritzGPT23} have structural 
arrows for duplicating 
and discharging objects: these families of arrows identify monoidal transformations, and make such categories models for relation-like structures.
If these arrows are natural, we obtain cartesian monoidal categories~\cite{Fox:CACC}.
%
If only the discharge arrows form a natural transformation, the resulting categories are known as Markov categories~\cite{Fritz_2020,cho_jacobs_2019}, which provide a framework for probabilistic reasoning~\cite{Fritz_2020}.
If only the duplicating arrows form a natural transformation, the resulting categories are known as cartesian restriction categories (or with restriction products)~\cite{Cockett02,Cockett03}, which provide a framework for partiality~\cite{Robinson88}.

The taxonomy included characterisation results for families of monads, e.g. showing that the Kleisli category of an affine monad on 
a Markov category is also Markov, or that the same holds for relevant monads on cartesian restriction categories, generalising the well-known fact that the Kleisli category 
of a commutative monad on a cartesian category is symmetric monoidal. Given the relevance of Kleisli categories for the theory of computation after 
the seminal work by Moggi~\cite{Moggi91}, it seemed useful 
to precisely state which axioms hold for a Kleisli category with respect to a given monad on a given base category.

This paper moves from \cite{cioffogadduccitrottataxonomy},
%
and its starting point lies in the serendipitous discovery that the Kleisli category of an affine monad on a cartesian restriction category is not necessarily a cartesian restriction one, 
yet it satisfies one key axiom of 
such categories, called the domain equation in the literature, which has recently come to the forefront for partial Markov categories~\cite{abs-2502-03477,LavoreR23}
and on its own for quasi-Markov categories, independently introduced in~\cite{mohammed2025partializationsmarkovcategories,empirical}.
%
%
We thus identified two classes of categories between gs-monoidal, Markov and cartesian restriction categories, 
sharing the same structural arrows, hence suitable as models of ``categories for relations''.
More precisely, we introduce \emph{mass} and \emph{domain categories}, 
and we show that both Markov and cartesian restriction categories are instances of such categories.
%

\vspace{.1cm}
\noindent
\begin{minipage}[l]{.30\linewidth}
\hspace{.4cm}
We thus obtain the diagram aside, which now includes mass, domain and weakly Markov categories. Furthermore, we prove that Markov categories are precisely those categories 
that are both mass and weakly Markov categories. Finally, we consider mass and domain preserving monads, showing under which conditions the associated Kleisli categories lift the structure of the base categories.
\end{minipage}
\begin{minipage}[r]{.20\linewidth}
\begin{tikzcd}
	& {\text{GS-monoidal}} \\
	\text{Weakly Markov} & \text{Mass} \\
	& \text{Domain} \\
	{\text{Markov}} && {\text{Cartesian restriction}} \\
	& {\text{Cartesian monoidal}}
	\arrow[from=2-1, to=1-2]
	\arrow[from=2-2, to=1-2]
	\arrow[from=3-2, to=2-2]
	\arrow[from=4-1, to=3-2]
	\arrow[from=4-1, to=2-1]
	\arrow[from=4-3, to=3-2]
	\arrow[from=5-2, to=4-1]
	\arrow[from=5-2, to=4-3]
\end{tikzcd}
\end{minipage}
\vspace{.1cm}


The paper has the following structure. 
\Cref{prem} recalls 
gs-monoidal, Markov and cartesian restriction categories, and it is rounded up with a few simple,
yet we believe original, results on monoid objects in gs-monoidal categories.
\Cref{presMassDom} introduces mass and domain categories, while \Cref{presFunctors} introduces mass and domain preserving functors. Then
the connections are shown with Markov and cartesian restriction categories in the former case, and with affine and relevant functors
in the latter.
\Cref{kleisli} contains the key results of our work, namely when either the mass or the domain structure is lifted to the Kleisli category,
and the decomposition of Markov categories (affine functors) in terms of weakly Markov and mass categories (weakly affine and mass preserving functors).
\Cref{WR} and \Cref{parMarkov} provide our case studies, showing how the notions we introduced are instantiated to the category of 
semiring-weighted relations and to recent work on partiality in Markov categories.
\Cref{oplax} concludes the paper with an analysis of order-enriched mass and domain categories and of their connections 
with oplax cartesian categories.


\section{Preliminaries}
\label{prem}
The first section recalls basic definitions about gs-monoidal, Markov and cartesian restriction categories.
The second section discusses monoid and comonoid objects in a symmetric monoidal category,
showing some simple results of which we are not aware of in the literature. 
We refer to \cite{cioffogadduccitrottataxonomy} and the references therein
for an introduction to gs-monoidal categories and their connection with categories for relations.

\subsection{GS-monoidal categories}

We fix a symmetric monoidal category $(\mC, \otimes, I)$. The axioms are presented using (upwards) string diagrams notation, 
and as usual in the graphical calculus for strict symmetric monoidal categories, the equations of string diagrams are understood modulo 
associativity of the monoidal product and cancellation of the unit.

\begin{definition}\label{dfn:gscategory}
    A \emph{gs-monoidal category} (GSM category for short) is a symmetric monoidal category $(\mC, \otimes, I)$ together with two distinguished arrows for every object $X$
\ctikzfig{copyanddel}
These arrows must be multiplicative with respect to the monoidal structure, meaning that they satisfy 
\ctikzfig{axiom_grb_cat}
\ctikzfig{comon-struct-mult-share-cat}
Also, every object $X$ has a cocommutative comonoid structure
    \ctikzfig{comonoid_share_cat_copy_del}
%
\end{definition}

We refer to $!_X : X \to I$ as the \emph{discharger} and to $\nabla_X : X \to X \otimes X$ as the \emph{duplicator}.

%

It is now well-known that
Markov and cartesian restriction categories are instances of GSM ones.

\begin{definition}\label{def:restriction}
Let $\mC$ be a GSM category. We say that  $\mC$ is a \emph{cartesian restriction category} if every arrow is \emph{copyable}, namely
    \ctikzfig{functional}

We say that $\mC$ is a \emph{Markov} category if every arrow is \emph{total}, namely
	\ctikzfig{full}

We say that $\mC$ is a \emph{cartesian monoidal} category if every arrow is copyable and total.
\end{definition}

As originally noted by Fox~\cite{Fox:CACC}, every cartesian monoidal category is in fact a cartesian category with a choice for the binary products and the terminal object.

 \subsection{Monoids in gs-monoidal categories}
 \label{monGS}
 
 We provide a few simple results for monoids in GSM categories. In the following left and right unitors are denoted by $\lambda$ and $\rho$, respectively{, and braidings by $\gamma$}.
 
\begin{lemma}
\label{cansem}
Let $\mC$ be a GSM category. Then each object $X$ has a canonical structure $\langle X, \Delta_X \rangle$ 
of a special semigroup object in $\mC$ for $\Delta_X = (\id_X \otimes !_X) ; \rho^{-1}_X$,
where special means that $\nabla_X ; \Delta_X = \id_X$.
Moreover, $I$ is a special commutative monoid object in $\mC$ for $\langle I, \Delta_I, \cobang_I \rangle$ 
with $\Delta_I = \rho^{-1}_I$ and $\cobang_I = \id_I$.
\end{lemma}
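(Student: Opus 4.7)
The plan is to verify the equations in two stages: first the semigroup axioms together with the special law for the structure $\Delta_X$ on an arbitrary object, and then the unit and commutativity axioms for the monoid structure on $I$.

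For associativity of $\Delta_X$, I would expand both $(\Delta_X \otimes \id_X) ; \Delta_X$ and $(\id_X \otimes \Delta_X) ; \Delta_X$ as morphisms $X \otimes X \otimes X \to X$. Using the bifunctoriality of $\otimes$ and the naturality of the right unitor, each side rewrites as $\id_X \otimes !_X \otimes !_X$ postcomposed with a canonical isomorphism $X \otimes I \otimes I \to X$; the two canonical maps coincide by Mac Lane's coherence theorem, giving associativity. The special law $\nabla_X ; \Delta_X = \id_X$ then follows in one line by unfolding the definition,
\[
\nabla_X ; \Delta_X \;=\; \nabla_X ; (\id_X \otimes !_X) ; \rho^{-1}_X \;=\; \rho_X ; \rho^{-1}_X \;=\; \id_X,
\]
where the middle equality is precisely the counit law of the comonoid structure $(X, \nabla_X, !_X)$ already included in the definition of GSM category.

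For the structure on $I$, the multiplicativity axioms of $!$ and $\nabla$ specialised to the unit object force $!_I = \id_I$ and $\nabla_I = \rho_I$, so that $\Delta_I = (\id_I \otimes !_I) ; \rho^{-1}_I = \rho^{-1}_I$ is literally inverse to $\nabla_I$, and $\cobang_I = \id_I$. The unit law $(\cobang_I \otimes \id_I) ; \Delta_I = \id_I$ and the commutativity $\sigma_{I,I} ; \Delta_I = \Delta_I$ then reduce to standard consequences of coherence for symmetric monoidal categories, namely $\lambda_I = \rho_I$ and $\sigma_{I,I} = \id_{I \otimes I}$.

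I do not foresee any substantive obstacle: the entire verification is bookkeeping with unitors, multiplicativity, and coherence, and invokes no categorical ingredient beyond the comonoid counit axiom already built into the definition of a GSM category.
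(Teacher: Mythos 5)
Your proof is correct, and it matches what the paper expects: the authors state this lemma without proof, treating it as routine bookkeeping, and your verification (counit law for specialness, bifunctoriality plus unitor naturality plus Mac Lane coherence for associativity, and the forced identities $!_I=\id_I$, $\nabla_I=\rho_I$, $\lambda_I=\rho_I$, $\gamma_{I,I}=\id$ for the monoid structure on $I$) is exactly the standard argument. The only simplification available is that specialness for $I$ is already the instance $X=I$ of the general special law, since $\Delta_I=(\id_I\otimes !_I);\rho_I^{-1}=\rho_I^{-1}$.
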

 We now investigate how these structures can be preserved along functors.
  
 \begin{proposition}
 \label{mon-pres}
Let $\mC$, $ {\mD}$ be symmetric monoidal categories, 
$\freccia{\mC}{F}{\mD}$ a lax symmetric monoidal functor with structural arrows
$\psi_{X,X}$ and $\psi_0$,
and $\langle X, \Delta_X, \cobang_X \rangle$ a (commutative) monoid object in $\mC$.
Then $\langle F(X), \Delta_{F(X)},  \cobang_{F(X)} \rangle$ is a (commutative) monoid object in $\mD$ with multiplication 
$\Delta_{F(X)} = \psi_{X,X} ; F(\Delta_X)$ and unit $\cobang_{F(X)} = \psi_0 ; F(\cobang_X)$.
\end{proposition}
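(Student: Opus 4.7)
The plan is to verify the monoid axioms (associativity, unitality, and commutativity in the commutative case) for the candidate structure $\langle F(X), \Delta_{F(X)}, \cobang_{F(X)} \rangle$ in $\mathcal{D}$, by systematically combining the coherence conditions of the lax symmetric monoidal functor $F$ with the corresponding axioms satisfied by $\Delta_X$ and $\cobang_X$ in $\mathcal{C}$. The overall strategy is the one that underlies the classical fact that lax monoidal functors transport monoid objects along themselves; the work is essentially bookkeeping.

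First, I would check associativity. Expanding $(\Delta_{F(X)} \otimes \id_{F(X)}); \Delta_{F(X)}$ using the definition gives $(\psi_{X,X} \otimes \id_{F(X)}); (F(\Delta_X) \otimes \id_{F(X)}); \psi_{X,X}; F(\Delta_X)$. Naturality of $\psi$ applied to the morphism $\Delta_X \colon X \otimes X \to X$ in the first slot lets me rewrite the middle portion as $\psi_{X\otimes X, X}; F(\Delta_X \otimes \id_X)$, so the whole composite becomes $(\psi_{X,X} \otimes \id_{F(X)}); \psi_{X\otimes X, X}; F((\Delta_X \otimes \id_X); \Delta_X)$. A symmetric computation on the other side of the associativity equation produces the analogous expression with $(\id_X \otimes \Delta_X); \Delta_X$ inside $F$ and with $(\id_{F(X)} \otimes \psi_{X,X}); \psi_{X, X\otimes X}$ outside. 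The associativity coherence axiom for $\psi$ (the pentagon-like square relating the two ways of factoring $\psi_{X\otimes X\otimes X}$) reconciles the outer parts modulo associators, while $F$ applied to the associativity law of the monoid $X$ reconciles the inner parts.

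Next, I would verify the left unit law; the right one is symmetric. Here $(\cobang_{F(X)} \otimes \id_{F(X)}); \Delta_{F(X)}$ expands to $(\psi_0 \otimes \id_{F(X)}); (F(\cobang_X) \otimes \id_{F(X)}); \psi_{X,X}; F(\Delta_X)$. Again naturality of $\psi$ in the first argument rewrites the middle as $\psi_{I,X}; F(\cobang_X \otimes \id_X)$, giving $(\psi_0 \otimes \id_{F(X)}); \psi_{I,X}; F((\cobang_X \otimes \id_X); \Delta_X)$. The unit coherence axiom of $F$ identifies $(\psi_0 \otimes \id_{F(X)}); \psi_{I,X}$ with $\lambda_{F(X)}; F(\lambda_X^{-1})$ (up to the usual unitor conventions), and the left unit law of the monoid on $X$ reduces $(\cobang_X \otimes \id_X); \Delta_X$ to $\lambda_X$, so the composite collapses to $\lambda_{F(X)}$ as required.

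In the commutative case, I would use the compatibility of $\psi$ with the symmetry, namely $\sigma_{F(X), F(X)}; \psi_{X,X} = \psi_{X,X}; F(\sigma_{X,X})$, to rewrite $\sigma_{F(X), F(X)}; \Delta_{F(X)}$ as $\psi_{X,X}; F(\sigma_{X,X}; \Delta_X)$, and then commutativity of $\Delta_X$ inside $F$ closes the argument. The main obstacle, in the sense of the only thing that requires care, is keeping track of the associators and unitors when invoking the coherence axioms; none of the individual steps is genuinely hard, but organising them into a single clean chain of equalities (or, equivalently, into a single commuting string diagram) is the technical cost of the proof.
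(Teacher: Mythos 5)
Your proof is correct and is exactly the classical argument (naturality of $\psi$ plus the associativity, unitality, and symmetry coherence axioms of the lax monoidal functor, combined with the monoid axioms of $X$) that the paper implicitly relies on: the paper states this proposition without proof, treating it as the well-known fact that lax (symmetric) monoidal functors transport (commutative) monoid objects. Note only that the statement's $T(X)$ is a typo for $F(X)$, which you have silently and correctly repaired.
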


The proposition above can be generalised for semigroup objects $\langle X, \Delta_X \rangle$. 

\begin{remark}
Note that in the case that $\mC$ and $ {\mD}$ above are GSM categories, the property for the canonical objects
introduced in Lemma~\ref{cansem} to be special is not preserved along $F$: see Lemma~\ref{specialLemma}.
\end{remark}

Finally, we show how the internal structure of an object is reflected on the hom-sets.

\begin{proposition}
\label{semihom}
Let $\mC$ be a GSM category. Then $\langle X, \Delta_X, \cobang_X \rangle$ is a (commutative) monoid object for $\mC$ if and only if for every object $Y$ in $\mC$ the pair 
$\langle  \mC(Y,X), m_{Y,X}, e_{Y,X} \rangle$
is a (commutative) monoid with multiplication $m_{Y,X}:\mC(Y,X)\times \mC(Y,X)\to \mC(Y,X)$ and unit $e_{Y,X} \in \mC(Y,X)$
satisfying 
\[
m_{Y,X}(f,g)=\nabla_Y\comp (f\otimes g) \comp m_{X\otimes X,X}((\id_X\otimes !_X){;\rho_x^{-1}},(!_X\otimes \id_X){;\lambda_x^{-1}})
\qquad
e_{Y,X} = !_Y\comp\cobang_X
\]
\end{proposition}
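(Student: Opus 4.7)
The plan is to read the statement as a representability-style characterisation, where an internal (commutative) monoid structure on $X$ is traded for a compatible family of external monoid structures on the hom-sets $\mC(Y,X)$. The GSM structure supplies the comonoid $(\nabla_Y,!_Y)$ on every source $Y$, so the convolution $(f,g)\mapsto \nabla_Y\comp (f\otimes g)\comp \Delta_X$ and the constant $!_Y\comp \cobang_X$ are always well-defined. My strategy in both directions is to unfold the formulas in string-diagram form and then apply the comonoid axioms on the source $Y$ together with the (assumed) monoid axioms on the target.

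For the forward direction, assuming $\langle X,\Delta_X,\cobang_X\rangle$ is a (commutative) monoid and taking $m_{Y,X}$ and $e_{Y,X}$ as in the statement, associativity of $m_{Y,X}$ follows by expanding $m(m(f,g),h)$ and $m(f,m(g,h))$ and successively applying coassociativity of $\nabla_Y$, functoriality of $\otimes$, and associativity of $\Delta_X$. The left unit law $m(e_{Y,X},f)=f$ unfolds, by functoriality of $\otimes$, to $(\nabla_Y\comp (!_Y\otimes \id_Y))\comp (\cobang_X\otimes f)\comp \Delta_X$, and then the comonoid counit axiom $\nabla_Y\comp (!_Y\otimes \id_Y)=\lambda_Y^{-1}$, the monoid unit axiom $(\cobang_X\otimes \id_X)\comp \Delta_X=\lambda_X$, and naturality of $\lambda$ collapse the expression to $f$. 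Commutativity of $m_{Y,X}$, when $\Delta_X$ is commutative, is obtained from cocommutativity of $\nabla_Y$ and naturality of the symmetry.

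For the backward direction, set $\Delta_X := m_{X\otimes X,X}(\id_X\otimes\, !_X,\, !_X\otimes \id_X)$ and $\cobang_X := e_{I,X}$ (using $!_I=\id_I$ to identify $e_{I,X}$ with $\cobang_X$). The key tool is to probe the hom-set monoid axioms on suitably chosen test arrows. With $Y=X\otimes X\otimes X$ and the three ``generalised projections'' $\pi_1=\id_X\otimes\, !_X\otimes\, !_X$, $\pi_2=\,!_X\otimes \id_X\otimes\, !_X$, $\pi_3=\,!_X\otimes\, !_X\otimes \id_X$, the multiplicativity of $\nabla$ and $!$ yields $\nabla_Y\comp(\pi_1\otimes \pi_2)=\id_X\otimes \id_X\otimes\, !_X$ (up to unitors), so $m_{Y,X}(\pi_1,\pi_2)$ is ``multiply the first two, discard the third'' and analogously $m_{Y,X}(\pi_2,\pi_3)$ is ``discard the first, multiply the last two''. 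The hom-set associativity $m(m(\pi_1,\pi_2),\pi_3)=m(\pi_1,m(\pi_2,\pi_3))$ thus reduces to $(\Delta_X\otimes \id_X)\comp \Delta_X=(\id_X\otimes \Delta_X)\comp \Delta_X$. The unit law for $\Delta_X$ is extracted from $m_{X,X}(e_{X,X},\id_X)=\id_X$, which unfolds to $(\cobang_X\otimes \id_X)\comp \Delta_X=\lambda_X$; commutativity, when the hom-set monoids are commutative, comes from $m_{X\otimes X,X}(\pi_1,\pi_2)=m_{X\otimes X,X}(\pi_2,\pi_1)$, which collapses to $\Delta_X=\sigma_{X,X}\comp \Delta_X$.

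The main obstacle is the GSM bookkeeping: recognising compositions of duplicators and dischargers as projection-like arrows modulo coherent unitor and symmetry isomorphisms. This is mechanical in the string-diagram calculus; the only non-automatic choice is that of the ``generalised projections'' $\pi_i$ as the test arrows that force associativity, unit, and commutativity of $\Delta_X$ out of the corresponding hom-set monoid axioms.
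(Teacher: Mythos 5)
Your proposal is correct and follows essentially the same route as the paper: the forward direction is the same convolution-style verification using the comonoid axioms on $Y$ and the monoid axioms on $X$, and the backward direction defines $\Delta_X := m_{X\otimes X,X}(\id_X\otimes\, !_X,\, !_X\otimes \id_X)$ and $\cobang_X := e_{I,X}$ and extracts the unit law from $m_{X,X}(e_{X,X},\id_X)=\id_X$ exactly as the paper does. Your test-arrow argument with the generalised projections $\pi_1,\pi_2,\pi_3$ is precisely the unfolding of the paper's ``associativity follows similarly'' (and you additionally spell out commutativity, which the paper leaves implicit).
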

\begin{proof}
Assume that $\langle X, \Delta_X, \cobang_X \rangle$ is a monoid object for $\mC$. Then for every object $Y$ of $\mC$, the hom-set $\mC(Y,X)$ has a monoidal structure 
given by the following arrows
\begin{itemize}
	\item multiplication $m_{Y,X}:\mC(Y,X)\times \mC(Y,X)\to \mC(Y,X)$ as the assignment $(f,g)\mapsto \nabla_Y \comp (f\otimes g)\comp \Delta_X$;
	\item unit $e_{Y,X} \in \mC(Y,X)$ as the element $!_Y\comp \cobang_X$.
\end{itemize}
The associativity of $m_{Y,X}$ follows from the associativity of $\Delta_X$ and $\nabla_Y$. The unitality of $m_{Y,X}$ follows from the unitality of $\cobang_X$ and the axiom $\nabla_Y\comp (\id_Y\otimes !_Y);\rho_Y^{-1}=\id_Y$: for every $f\in \mC(Y,X)$ we have that 
\[\nabla_Y \comp (f\otimes (!_Y\comp \cobang_X))\comp \Delta_X= \rho_Y;(f\otimes \cobang_X)\comp \Delta_X=\rho_Y;(f\otimes \id_I);\rho_X^{-1}= f.\]
Moreover, since the monoid $\langle X, \Delta_X, \cobang_X\rangle$ is commutative, therefore it immediately follows that $m_{Y,X}$ is commutative for every $X,Y\in\mC$.
%


Conversely, let us define $(X,\Delta_X,\cobang_X)$ as $\Delta_X = m_{X\otimes X,X}((\id_X\otimes !_X){;\rho_x^{-1}},(!_X\otimes \id_X){;\lambda_x^{-1}})$ and $\cobang_X= e_{I,X}$. The unitality condition follows from the following computation
\begin{align}
	\id_X&=m_{X,X}(e_{X,X},\id_X)\tag{Unit.\ of $m$}\\
	&= \nabla_X;(e_{X,X}\otimes \id_X);m_{X\otimes X,X}(\id_X\otimes !_X,!_X\otimes \id_X) \tag{Def.\ of $m$}\\
	&= \nabla_X;((!_X;e_{I,X})\otimes (\id_X));m_{X\otimes X,X}((\id_X\otimes !_X){;\rho_x^{-1}},(!_X\otimes \id_X){;\lambda_x^{-1}}) \tag{Def.\ of $e_{X,X}$}\\
	&= \lambda_X; (e_{I,X}\otimes \id_X);m_{X\otimes X,X}((\id_X\otimes !_X){;\rho_x^{-1}},(!_X\otimes \id_X){;\lambda_x^{-1}}) \tag{Unit.\ of $\nabla_X$}\\
	&= \lambda_X; \cobang_X\otimes \id_X;\Delta_X \tag{Def.\ of $e_{I,X}$}
\end{align}
Associativity follows similarly.  Moreover, if $m_{Y,X}$ is commutative for every $X,Y\in\mC$, then the following computation shows that $(X,\Delta_X,\cobang_X)$ is commutative
\begin{align*}
	\Delta_X&= m_{X\otimes X,X}((\id_X\otimes !_X){;\rho_x^{-1}},(!_X\otimes \id_X){;\lambda_x^{-1}}) \tag{Def.\ of $\Delta_X$}\\
	&=m_{X\otimes X,X}((!_X\otimes \id_X){;\lambda_x^{-1}},(\id_X\otimes !_X){;\rho_x^{-1}})\tag{Comm.\ of $m$} \\
	&= \nabla_{X\otimes X};(((!_X\otimes \id_X);\lambda_x^{-1})\otimes ((\id_X\otimes !_X);\rho_x^{-1}));m_{X\otimes X,X}((\id_X\otimes !_X){;\rho_x^{-1}},(!_X\otimes \id_X){;\lambda_x^{-1}})  \tag{Def.\ of $m$}\\
	&= \gamma_{X,X};m_{X\otimes X,X}((\id_X\otimes !_X){;\rho_x^{-1}},(!_X\otimes \id_X){;\lambda_x^{-1}}) \tag{Def. \ref{dfn:gscategory}}\\
	&= \gamma_{X,X};\Delta_X.
\end{align*}
\end{proof}

As before, the result above can be generalised to semigroup objects $\langle X, \Delta_X \rangle$. 

\section{Mass and domain categories}
\label{presMassDom}
We introduce the two kinds of categories we focus our attention on. First, we recall the notions of mass and domain of an arrow, 
the latter generalising the domain of a function.

\begin{definition}
	Let $\mC$ be a gs-monoidal category and $\freccia{X}{f}{Y}$ an arrow in $\mC$. We define the \emph{mass} and the \emph{domain} of $f$ as the arrows below
\ctikzfig{domain}
\end{definition}

Textually, $\dom(f) = \nabla_X ; (\id_X \otimes f; !_Y) ; \rho^{-1}_X: X \to X$
and $\mass(f) = f; !_Y: X \to I$.

\begin{definition}\label{pr_cat}
A GSM category $\mC$ is 
a \emph{domain preserving category} (shortly, \emph{domain category}) if for every arrow $f:X\to Y$ in $\mC$ the equality on the left below holds, 
while it is a \emph{mass preserving category} (shortly, \emph{mass category}) if for every arrow $f:X\to Y$ in $\mC$ the equality on the right below holds
\begin{center}
\tikzfig{PR_definition}
\qquad
\tikzfig{IM_definition}
\end{center}
\end{definition}

\begin{remark}
The left-most equation  above can be written as  $\dom(f);f=f$, which is known as the domain equation
in the literature on restriction categories or quasi-totality in \cite{LavoreR23,abs-2502-03477}. 
In the literature on categorical probability, domain categories have been recently introduced as quasi-Markov 
categories~\cite{empirical,mohammed2025partializationsmarkovcategories}.
\end{remark}


\begin{remark}
The right-most equation above can be written as $\dom(f);\mass(f)=\mass(f)$, and the notion of mass is taken from weakly-Markov categories~\cite{FritzGPT23}. 
Clearly, domain categories are mass categories. 
The converse does not hold in general (see Section~\ref{WR}); 
however, as shown in~\cite[Lemma 3.17]{abs-2502-03477}, the
two conditions are equivalent in partial Markov categories.
\end{remark}

It is an easy check that the pre- and post-composition with structural arrows do not change the domain.

\begin{lemma}
Let $\mC$ be a domain category and $\freccia{X}{f}{Y}$ an arrow in $\mC$. Then it holds 
\begin{itemize}
 \item $\dom (f;\nabla_Y)=\dom (\nabla_X; f\otimes f) = \dom(f)$;
 \item $\dom (f;!_Y)=\dom (f)$.
\end{itemize}
\end{lemma}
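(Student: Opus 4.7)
The plan is to notice that, for any $g\colon X\to Z$, the expression $\dom(g)=\nabla_X;(\id_X\otimes\mass(g));\rho^{-1}_X$ depends on $g$ only through its mass $\mass(g)$ and the source $X$. Each of the three equalities therefore reduces to showing the corresponding identity of masses, namely
$$\mass(f;\nabla_Y)=\mass(\nabla_X;(f\otimes f))=\mass(f;!_Y)=\mass(f).$$

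Two of these reductions are essentially structural and do not use the domain axiom. For $\mass(f;!_Y)=\mass(f)$ one unfolds to $f;!_Y;!_I=f;!_Y$ and uses that $!_I=\id_I$, as recorded in \cref{cansem}. For $\mass(f;\nabla_Y)=\mass(f)$ the target identity is $\nabla_Y;!_{Y\otimes Y}=!_Y$ (up to the coherence iso $I\otimes I\cong I$), which follows from the multiplicativity $!_{Y\otimes Y}=!_Y\otimes!_Y$ together with the comonoid axiom asserting that the composite $\nabla_Y;(!_Y\otimes\id_Y)$ is a unitor; diagrammatically, two discharger triangles sitting on a duplicator collapse to a single one.

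The main obstacle, and the only step that genuinely uses the axiom of \cref{pr_cat}, is the middle equality. Unfolding gives $\mass(\nabla_X;(f\otimes f))=\nabla_X;(\mass(f)\otimes\mass(f))$, so it suffices to show that for every $a\colon X\to I$ one has $\nabla_X;(a\otimes a)=a$ modulo the coherence iso $I\otimes I\cong I$. The plan is to decompose $a\otimes a=(\id_X\otimes a);(a\otimes\id_I)$, rewrite the first factor via the definition of $\dom(a)$ to obtain $\nabla_X;(\id_X\otimes a)=\dom(a);u$ for the inverse unitor $u\colon X\to X\otimes I$, then push $u$ through $a\otimes\id_I$ by naturality of the unitor along $a$, and finally invoke the domain equation $\dom(a);a=a$. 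The remaining unitors collapse to the intended isomorphism $I\otimes I\cong I$; applying the resulting identity to $a=\mass(f)$ yields the claim, and hence the three announced equalities of domains follow from the common reduction.
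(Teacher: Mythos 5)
Your argument is correct. The paper offers no proof of this lemma at all---it is introduced with ``it is direct to check\ldots''---so there is no official argument to compare against; what you give is a legitimate and economical way of filling that gap. The organising observation that $\dom(g)=\nabla_X\comp(\id_X\otimes\mass(g))\comp\rho^{-1}_X$ depends on $g$ only through $\mass(g)$ cleanly reduces all three claims to identities of masses, and your case split correctly isolates where the axiom of \Cref{pr_cat} is genuinely used: only in $\mass(\nabla_X\comp(f\otimes f))=\mass(f)$, via $\nabla_X\comp(a\otimes a)=a$ (modulo $I\otimes I\cong I$) for $a=\mass(f)\colon X\to I$, which your decomposition $a\otimes a=(\id_X\otimes a)\comp(a\otimes\id_I)$ together with naturality of the unitor and $\dom(a)\comp a=a$ does establish; the other two identities are pure gs-monoidal bookkeeping, as you say. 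Two minor remarks. First, the fact $!_I=\id_I$ that you invoke twice is one of the multiplicativity axioms of a GSM category, not a consequence of \Cref{cansem}, which concerns the monoid unit $\cobang_I=\id_I$ rather than the comonoid counit; this does not affect the argument. Second, your proof of the middle equality only ever applies the domain equation to an arrow with codomain $I$, and since $\dom(\mass(f))=\dom(f)$ by your (axiom-free) second item, the instance you need is exactly the mass equation of \Cref{def_IM_cat}; so your argument in fact proves the lemma for any mass category, which is slightly stronger than what is stated.
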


\begin{lemma}\label{lemma:PR_subcategories}
Let $\mC$ be either a Markov or a cartesian restriction category. Then it is a domain category.
\end{lemma}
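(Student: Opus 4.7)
The plan is to split the argument into the two cases according to which additional axiom $\mC$ satisfies. In either case, the target is to establish the domain equation $\dom(f)\comp f = f$ for an arbitrary arrow $\freccia{X}{f}{Y}$, starting from the definition $\dom(f) = \nabla_X \comp (\id_X \otimes (f\comp !_Y)) \comp \rho^{-1}_X$.

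First I would dispatch the Markov case. Totality of $f$ gives $f\comp !_Y = !_X$, and substituting this into the definition of $\dom(f)$ yields $\nabla_X \comp (\id_X \otimes !_X) \comp \rho^{-1}_X$. By the comonoid counit law for $X$, this collapses to $\id_X$, so $\dom(f)\comp f = f$. Diagrammatically this is obvious: totality absorbs $f$ into the discharger, after which the counit axiom cancels the right-hand wire.

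For the restriction case the argument is slightly longer but equally mechanical. I would first use naturality of the right unitor together with bifunctoriality of the monoidal product to rewrite $\dom(f)\comp f$ as $\nabla_X \comp (f \otimes f) \comp (\id_Y \otimes !_Y) \comp \rho^{-1}_Y$. Then copyability of $f$ replaces $\nabla_X\comp (f\otimes f)$ by $f\comp \nabla_Y$, leaving $f\comp \nabla_Y\comp (\id_Y \otimes !_Y)\comp \rho^{-1}_Y$, which is just $f$ by the comonoid counit law for $Y$.

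I do not anticipate any substantive obstacle: both legs of the proof are routine string-diagram rewriting, and each of the two axioms (totality and copyability) acts on one of the two wires produced by $\nabla_X$, so that the counit law immediately disposes of the remaining discharger. The only minor care required is the naturality step in the restriction case, where one needs to pull $f$ past the unitor before copyability becomes applicable; this is a direct consequence of $\rho^{-1}$ being a monoidal natural isomorphism.
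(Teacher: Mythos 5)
Your proof is correct and follows essentially the same route as the paper's string-diagrammatic argument: in the Markov case totality reduces $\dom(f)$ to $\id_X$ via the counit law, and in the restriction case naturality of $\rho^{-1}$ plus copyability slides $f$ out so that the counit law on $Y$ finishes the job. The paper presents exactly these two rewrites as string diagrams, so nothing more is needed.
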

\begin{example}
The category $\Rel$ of sets and relations is the leading example of a GSM category (with respect to the ordinary product of sets, see \cite[Rem. 2.16]{FritzGCT23})
that is also a domain category, yet neither a Markov nor a restriction one. 
We will see how this generalises to  semiring-weighted relations in Section~\ref{WR}.
\end{example}

\subsection{More on restriction categories}
\label{appRes}

Cartesian restriction categories are an instance of the more general restriction categories. Adopted since~\cite{Cockett02} as a categorical abstraction of partiality,
their standard presentation is given below.

\begin{definition}\label{dfn: restriction category}
A \emph{restriction structure} on a category  $\mC$ is an assignment which sends every arrow $f:X\to Y$ of $\mC$ to an arrow $\overline{f}:X\to X$ 
such that the following conditions hold
\begin{description}
	\item[(R.1)] $f \circ \overline{f}=f$,
	\item[(R.2)] $\overline{f} \circ \overline{g} = \overline{g} \circ \overline{f}$ for $g: X \to W$,
	\item[(R.3)] $\overline{g \circ \overline{f}}= \overline{g} \circ \overline{f}$ for $g: X \to W$,
	\item[(R.4)] $\overline{g} \circ f= f\circ \overline{g \circ f}$ for $g: Y \to W$.
\end{description}
A \emph{restriction category} is a category equipped with a restriction structure.
\end{definition}

Note that (R.1) can be equivalently replaced by requiring $\overline{id_X} = id_X$ for every $X\in\mC$. 

A GSM category potentially supports the restriction structure, given as $\overline{f} = dom(f)$,
and as we noted (R.1) is precisely the axiom of domain categories. 
Using the restriction structure, the original definition of cartesian restriction categories was given 
in terms of restriction terminal object and products~\cite{Cockett07}. 
The correspondence with GMS categories with copyable arrows has been proved e.g. 
in~\cite[Proposition~2.58]{cioffogadduccitrottataxonomy}.

Here we recall the notion of positivity from the literature on Markov categories~\cite{Fritz_2020}
and we then prove a result showing that perhaps surprisingly mass categories already allow for a characterisation 
of partiality, as long as they satisfy the positivity condition.

\begin{definition}\label{positive}
    A GSM category is called \emph{positive} 
    if for every pair of arrows $f:X\to Y$ and $g: Y \to W$ such that 
    $ f; g$ is copyable then
\ctikzfig{positivity}
\end{definition}

\begin{proposition}
\label{positive as restriction}
	Let $\mC$ be a positive mass category. Then it is a restriction category.
\end{proposition}
\begin{proof}
	Define  $\overline{f} = \mathrm{dom}(f)$. The axioms $\overline{id_X} = id_X$ and ($R.2$) and ($R.3$) of Definition~\ref{dfn: restriction category} hold for any GSM category. 
	Since $\mC$ is a mass category, for every arrow $h:X\to W$ the composition $h;!_W$ is copyable. 
	Hence, applying positivity to $f$ and $g;!_W$ one obtains $(R.4)$ of Definition~\ref{dfn: restriction category}.
\end{proof}

\section{Functors preserving mass and domain}
\label{presFunctors}

Building on lax monoidal functors (see Appendix~\ref{sec:lax_app}), we now recall affine and relevant functors~\cite{Jacobs1994}.

\begin{definition}\label{def gsmonoidal functor}
Let $\mC$, $ {\mD}$ be GSM categories and $\freccia{\mC}{F}{\mD}$ a lax monoidal functor with structural arrows
$\psi_{X,Y}$ and $\psi_0$. We say that $F$ is \emph{affine}  if the diagram on the left commutes for all $X$ in $\mC$,
and that $F$ is  \emph{relevant} if the diagram on the right commutes for all $X$ in $\mC$

\[
\begin{tikzcd}[column sep=tiny]
	F(X) && {F(I)} \\
& I
\arrow["{F(!_X)}", from=1-1, to=1-3]
\arrow["{!_{F(X)}}"', from=1-1, to=2-2]
\arrow["{\psi_0}"', from=2-2, to=1-3]
\end{tikzcd}
\qquad
\qquad
	\begin{tikzcd}[column sep=tiny]
		{F(X)} && {F(X\otimes X)} \\
		& {F(X)\otimes F(X)}
		\arrow["{F(\nabla_X)}", from=1-1, to=1-3]
		\arrow["{\nabla_{FX}}"', from=1-1, to=2-2]
		\arrow["{\psi_{X,X}}"', from=2-2, to=1-3]
	\end{tikzcd}
	\]

We say that $F$ is \emph{cartesian monoidal} if it is both affine and relevant.
\end{definition}

We generalise these definitions in order to capture domain and mass categories.

\begin{definition}
Let $\mC$, $ {\mD}$ be GSM categories and 
$\freccia{\mC}{F}{\mD}$ a lax symmetric monoidal functor with structural arrows
$\psi_{X,Y}$ and $\psi_0$. We say that $F$ is \emph{domain preserving}  if the left-most diagram below 
commutes for all $X$ and that $F$ is \emph{mass preserving} if the right-most diagram below commutes for all $X$
\begin{center}
\begin{tikzcd}
	{F(X)\otimes F(X)} & {F(X)} \\
	{F(X \otimes X)} & {F(X\otimes I)}
	\arrow["{\psi_{X,X}}"', from=1-1, to=2-1]
	\arrow["{\nabla_{F(X)}}"', from=1-2, to=1-1]
	\arrow["{F(\rho_X)}", from=1-2, to=2-2]
	\arrow["{F(\id_X \otimes !_X)}"', from=2-1, to=2-2]
\end{tikzcd} \quad \begin{tikzcd}
	{F(X)\otimes F(X)} & {F(X)} & F(I) \\
	{F(X \otimes X)} & & F(I \otimes I)
	\arrow["{\psi_{X,X}}"', from=1-1, to=2-1]
	\arrow["{\nabla_{F(X)}}"', from=1-2, to=1-1]
	\arrow["{F(!_X)}", from=1-2, to=1-3]
	\arrow["{F(!_X \otimes !_X)}"', from=2-1, to=2-3]
	\arrow["{F(\rho_{I})}", from=1-3, to=2-3]
\end{tikzcd} 
\end{center}
\end{definition}

Note that the two diagrams above coincide for $X = I$, and we say that  
$F$ is \emph{unital domain preserving}  if they commute only for $X = I$.

\begin{proposition}
Let $\mC$ and $ {\mD}$ be GSM categories and 
$\freccia{\mC}{F}{\mD}$ a lax symmetric monoidal functor. 
If $F$ is domain preserving then it is mass preserving.
\end{proposition}
\begin{proof}
Immediate, since the right-most diagram below commutes for any monoidal category and lax symmetric monoidal functor
\begin{center}
\begin{tikzcd}
	{F(X)\otimes F(X)} & {F(X)} & F(I) \\
	{F(X \otimes X)} & {F(X\otimes I)} & F(I \otimes I)
	\arrow["{\psi_{X,X}}"', from=1-1, to=2-1]
	\arrow["{\nabla_{F(X)}}"', from=1-2, to=1-1]
	\arrow["{F(!_X)}", from=1-2, to=1-3]
	\arrow["{F(\id_X \otimes !_X)}"', from=2-1, to=2-2]
	\arrow["{F(!_X \otimes \id_I)}"', from=2-2, to=2-3]
	\arrow["{F(\rho_{I})}", from=1-3, to=2-3]
	\arrow["{F(\rho_{X})}", from=1-2, to=2-2]
\end{tikzcd}
\end{center}
%
\end{proof}

Lemma~\ref{cansem} states that for a GSM category $\mC$ every object $X$ is canonically a special semigroup object $\langle X, \Delta_X \rangle$ in  $\mC$
with $\Delta_X = (\id_X \otimes !_X) ; \rho^{-1}_X$, and Proposition~\ref{mon-pres} tells that, for a lax symmetric monoidal functor $F$, 
$\langle F(X), \psi_{X,X} ; F(\Delta_X) \rangle$ is a semigroup object in  $\mD$. Hence, we obtain the characterisation below.

\begin{lemma}
\label{specialLemma}
Let $\mC$ and $ {\mD}$ be GSM categories and 
$\freccia{\mC}{F}{\mD}$ a lax symmetric monoidal functor with structural arrows $\psi_{X,Y}$ and $\psi_0$. Then it holds that
\begin{itemize}
	\item $F$ is domain preserving if and only if $\langle F(X), \psi_{X,X} ; F(\Delta_X) \rangle$ is a special semigroup object in $\mD$ for every object $X$ in $\mC$; 							
	\item $F$ is unital domain preserving if and only if $\langle F(I), \psi_{I,I} ; F(\rho^{-1}_I), \psi_0 \rangle$ is a special commutative monoid object in $\mD$.
\end{itemize}
\end{lemma}

The second item above can be further strengthened.

\begin{proposition}\label{lemma:mass implies unital domain}
Let $\mC$, $ {\mD}$ be GSM categories and 
$\freccia{\mC}{F}{\mD}$ a lax symmetric monoidal functor. 
If $\mD$ is a cartesian restriction category then 
$F$ is mass preserving if and only if it is unital domain preserving.
\end{proposition}
\begin{proof}
If $\mD$ is a cartesian restriction category, we can take the definition of unital domain preserving, which states that $\nabla_{F(I)} ; \psi_{I,I} = F(\rho_I)$, since $\id_I = !_I$, and consider the definition of mass preserving which corresponds to the outer diagram below
\[
\begin{tikzcd}
	{F(X)\otimes F(X)} & {F(X)} & {F(I)} \\
	&& {F(I)\otimes F(I)} \\
	{F(X\otimes X)} && {F(I\otimes I)}
	\arrow["{F(!_X)\otimes F(!_X)}"', from=1-1, to=2-3]
	\arrow["{\psi_{X,X}}"', from=1-1, to=3-1]
	\arrow["{\nabla_{F(X)}}"', from=1-2, to=1-1]
	\arrow["{F(!_I)}", from=1-2, to=1-3]
	\arrow["{\nabla_{F(I)}}", from=1-3, to=2-3]
	\arrow["{F(\rho_I)}", curve={height=-45pt}, from=1-3, to=3-3]
	\arrow["{\psi_{I,I}}", from=2-3, to=3-3]
	\arrow["{F(!_X\otimes !_X)}"', from=3-1, to=3-3]
\end{tikzcd}\]
Since the inner bottom diagram commutes by naturality of $\psi$ and the inner top one by definition of restriction category, the outer diagram commutes as well. Hence, $F$ is mass preserving.
\end{proof}

\begin{proposition}
Let $F$ be either an affine or a relevant functor. Then it is domain preserving.
	\end{proposition}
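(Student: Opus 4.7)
The plan is to split the claim into the two cases. The relevant case is essentially immediate: the defining equation $\nabla_{F(X)} ; \psi_{X,X} = F(\nabla_X)$ lets me pull $F$ to the outside of the composite $\nabla_{F(X)} ; \psi_{X,X} ; F(\id_X \otimes !_X)$, after which the GSM comonoid axiom $\nabla_X ; (\id_X \otimes !_X) = \rho_X$ in the source category $\mC$ delivers $F(\rho_X)$ on the nose.

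The affine case is the substantive one, and I would chain together four standard ingredients along the same composite $\nabla_{F(X)} ; \psi_{X,X} ; F(\id_X \otimes !_X)$. First, naturality of $\psi$ applied to the morphism $!_X : X \to I$ rewrites $\psi_{X,X} ; F(\id_X \otimes !_X)$ as $(\id_{F(X)} \otimes F(!_X)) ; \psi_{X,I}$. Second, the affineness equation $F(!_X) = !_{F(X)} ; \psi_0$ replaces the inner $F(!_X)$ and isolates $\psi_0$ as a tail. Third, the comonoid axiom $\nabla_{F(X)} ; (\id_{F(X)} \otimes !_{F(X)}) = \rho_{F(X)}$ holding in the target GSM category $\mD$ collapses the prefix. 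Finally, the lax symmetric monoidal coherence triangle for the right unitor, which reads $F(\rho_X) = \rho_{F(X)} ; (\id_{F(X)} \otimes \psi_0) ; \psi_{X,I}$, closes the computation.

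I do not expect any genuine obstacle, since the whole argument is a bookkeeping exercise in naturality and monoidal coherence. The point most worth double-checking is the orientation of the unitor, which in this paper is $\rho_X : X \to X \otimes I$ (so that $\rho_X^{-1}$ appears in the definition of $\dom$), ensuring that the lax-monoidal coherence is invoked in the correct direction. As an alternative and perhaps more conceptual route, one could appeal to the remark preceding the statement: domain preservation is equivalent to the transported semigroup $\langle F(X), \psi_{X,X} ; F(\Delta_X) \rangle$ being special, i.e.\ to $\nabla_{F(X)} ; \psi_{X,X} ; F(\Delta_X) = \id_{F(X)}$. In the relevant case this is immediate from $F(\nabla_X ; \Delta_X) = F(\id_X)$ via Lemma~\ref{cansem}; in the affine case it amounts to showing that the transported multiplication on $F(X)$ reduces, through $\psi_0$ and the lax unit coherence, to the canonical special semigroup structure on $F(X)$ granted by Lemma~\ref{cansem} on the $\mD$-side.
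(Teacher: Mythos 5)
Your proof is correct and follows essentially the same route as the paper: the relevant case via $F(\nabla_X;(\id_X\otimes !_X))=F(\rho_X)$, and the affine case via naturality of $\psi$, the affineness triangle, the comonoid equation in $\mD$, and the lax unitality coherence for $\rho$. Your closing alternative through special semigroup objects matches the paper's own remark on the meaning of the domain-preserving equation, so nothing further is needed.
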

\begin{proof}
For relevant functors, consider the diagram below
\[
\begin{tikzcd}
	{F(X)\otimes F(X)} & {F(X)} \\
	{F(X \otimes X)} & {F(X\otimes I)}
	\arrow["{F(\nabla_X)}"', from=1-2, to=2-1]
	\arrow["{\psi_{X,X}}"', from=1-1, to=2-1]
	\arrow["{\nabla_{F(X)}}"', from=1-2, to=1-1]
	\arrow["{F(\rho)}", from=1-2, to=2-2]
	\arrow["{F(\id_X \otimes !_X)}"', from=2-1, to=2-2]
\end{tikzcd}\]
The upper diagram commutes since it is the property of being relevant and the lower diagram is just the image of the comonoid equation for $X$.

For affine functors, consider instead the diagram below
\[
\begin{tikzcd}
	& {F(X)\otimes F(X)} & {F(X)} \\
	{F(X) \otimes I} & {F(X) \otimes F(I)} & {F(X\otimes I)}
	\arrow["{\id \otimes !_{F(X)}}"', from=1-2, to=2-1]
	\arrow["{\id \otimes \psi_0}"', from=2-1, to=2-2]
	\arrow["{\id \otimes F(!_X)}", from=1-2, to=2-2]
	\arrow["{\nabla_{F(X)}}"', from=1-3, to=1-2]
	\arrow["{F(\rho)}", from=1-3, to=2-3]
	\arrow["{\psi_{X,I}}"', from=2-2, to=2-3]
\end{tikzcd}\]
The right-most diagram is just the domain preserving equation,
the left-most diagram is the property of being affine, and the 
outermost diagram is the unitality equation for $\rho_X$, since $\nabla_{F(X)} ; (\id \otimes !_{F(X)}) = \rho_{F(X)}$
is the codomain equation for $F(X)$. Hence, also the right-most diagram commutes.
\end{proof}


\section{Looking at Kleisli categories}
\label{kleisli}
It is well-known that a symmetric monoidal monad $T$ (see Appendix~\ref{sec:lax_app}) on a Markov category $\mC$ is affine if and only if the Kleisli category $\mC_T$ is again a Markov category and the same occurs for relevant monads with respect to cartesian restriction categories (see \cite[Theorem 4.3]{Jacobs1994}).
Now, let us say that a symmetric monoidal monad is domain/unital domain/mass preserving if the underlying functor is so.
We can generalise the result above for cartesian restriction (hence also for cartesian) categories.

We first provide a simple, yet lengthy, technical lemma. 

\begin{lemma}\label{lemma dom f;f nelle kleisli}
Let $\mC$ be a cartesian restriction category and $T$ a symmetric monoidal monad on it. For every arrow $\freccia{X}{f}{Y}$ of $\mC_T$ the domain 
equation in $\mC_T$ is
\[
\dom(f);^{\sharp}f=
f;\nabla_{TY};(T(\id_Y)\otimes T(!_Y));c_{Y,I};T(\rho^{-1})
\]
\end{lemma}

\begin{proof}
		Let $f:X\to Y$ be an arrow in $\mC_T$, i.e. an arrow $f:X\to T(Y)$ in $\mC$. Consider the composition
		\[\nabla_X^{\sharp};^{\sharp} (f\otimes^{\sharp}  f);^{\sharp}(\id\otimes^{\sharp}!^{\sharp}_Y)\]
		in $\mC_T$ that is equal to 
		\[\nabla_X; \eta_{X\otimes X}; T(f\otimes f);T( c_{Y,Y});\mu_{Y\otimes Y}; T(\eta_Y\otimes (!_Y;\eta_I));T( c_{Y,I});\mu_{Y\otimes I}\]
		By naturality of  $\eta$, we have that this composition is equal to
				\[ \nabla_X; (f\otimes f);\eta_{TY\otimes TY};
		T( c_{Y,Y});\mu_{Y\otimes Y}; T(\eta_Y\otimes (!_Y;\eta_I));T( c_{Y,I});\mu_{Y\otimes I}
		\]
		and this is equal to 
				\[\nabla_X;	(f\otimes f); c_{Y,Y};\eta_{T(Y\otimes Y)};
		\mu_{Y\otimes Y};T(\eta_Y\otimes (!_Y;\eta_I));T( c_{Y,I});	\mu_{Y\otimes I}
		\]
		that is 
			\[\nabla_X;	(f\otimes f); c_{Y,Y};T(\eta_Y\otimes (!_Y;\eta_I));T( c_{Y,I});\mu_{Y\otimes I}
		\]
		This is equal to
				\[\nabla_X;	(f\otimes f); c_{Y,Y};T(\id_Y \otimes !_Y);
		T( (\eta_Y\otimes \eta_I);c_{Y,I});\mu_{Y\otimes I}
		\]
		Now, since $ (\eta_Y\otimes \eta_I); c_{Y,I}= \eta_{Y\otimes I}$ (because the monad is symmetric monoidal), this is equal to
				\[\nabla_X;	(f\otimes f); c_{Y,Y};T(\id_Y \otimes !_Y)
		\]
		but now $c_{Y,Y};T(\id_Y \otimes !_Y)=(T(\id_T)\otimes T(!_Y));c_{Y,I}$, hence
				\[\nabla_X;(f\otimes f);	(T(\id_Y)\otimes T(!_Y));
		c_{Y,I}
		\]
		that is
		\[f;\nabla_{TY};(T(\id_Y)\otimes T(!_Y));c_{Y,I}	\]		
		Hence, we have proved that 
			\[\nabla_X^{\sharp};^{\sharp} (f\otimes^{\sharp}  f);^{\sharp}(\id\otimes^{\sharp}!^{\sharp}_Y)=f;\nabla_{TY};(T(\id_Y)\otimes T(!_Y));c_{Y,I}\]
and hence, since $(\rho^{-1}_Y)^{\sharp}=\rho^{-1}_Y;\eta_{Y}$, we can conclude that
\[\dom(f);^{\sharp}f=\nabla_X^{\sharp};^{\sharp} (f\otimes^{\sharp}  f);^{\sharp}(\id\otimes^{\sharp}!^{\sharp}_Y);^{\sharp}(\rho_Y^{-1})^{\sharp}=f;\nabla_{TY};(T(\id_Y)\otimes T(!_Y));c_{Y,I};T(\rho_Y^{-1})\]
\end{proof}

We now move to the key result concerning domain preservation.

\begin{theorem}\label{thm: restriction + special implies kl in PR}
		Let $\mC$ be a cartesian restriction category and $T$ a symmetric monoidal monad on $\mC$. Then $T$ is domain preserving if and only if $\mC_T$ is a domain category. 
	\end{theorem}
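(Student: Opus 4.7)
The plan is to translate the domain equation in $\mC_T$ into a $\mC$-identity via Lemma~\ref{lemma dom f;f nelle kleisli}, and then recognise that identity as the domain preservation equation for $T$, up to a single application of monoidal naturality.

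The first step is to simplify the $\mC$-form of $\dom(f);^{\sharp} f$ given by Lemma~\ref{lemma dom f;f nelle kleisli}. By naturality of the monoidal structural arrow $c$ of $T$, one has $(T(\id_Y)\otimes T(!_Y));c_{Y,I} = c_{Y,Y}; T(\id_Y \otimes !_Y)$, so the domain equation $\dom(f);^{\sharp} f = f$ in $\mC_T$ becomes, for every arrow $f:X\to Y$ of $\mC_T$, the $\mC$-equation
\[ f; \nabla_{TY}; c_{Y,Y}; T(\id_Y \otimes !_Y); T(\rho^{-1}_Y) = f. \]
The key observation is that the inner factor $\nabla_{TY}; c_{Y,Y}; T(\id_Y \otimes !_Y)$ is precisely the left-hand side of the domain preservation equation at $Y$, which by definition equals $T(\rho_Y)$.

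For the direction $(\Rightarrow)$, assuming $T$ is domain preserving, this inner factor collapses to $T(\rho_Y)$, and since $T(\rho_Y); T(\rho^{-1}_Y) = \id_{TY}$, the displayed equation reduces to $f = f$ for every $f$, showing that $\mC_T$ is a domain category.

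For the converse $(\Leftarrow)$, I would instantiate the displayed equation on a test arrow that erases the leading $f$. The right choice is $f = \id_{TX}\colon TX\to TX$ viewed as an arrow of $\mC_T$ from $TX$ to $X$: this immediately gives
\[ \nabla_{TX}; c_{X,X}; T(\id_X \otimes !_X); T(\rho^{-1}_X) = \id_{TX},\]
and post-composing both sides with $T(\rho_X)$, while using $T(\rho^{-1}_X); T(\rho_X) = \id_{T(X\otimes I)}$, yields exactly the domain preservation equation at $X$. The main subtlety is this choice of test arrow: taking instead $\id_X^{\sharp} = \eta_X$ would leave a spurious $\eta_X$ on the left and one could not cancel it in general; one must pass through $\id_{TX}$ regarded as a Kleisli arrow $TX\to X$ so that a bare identity is left on the right-hand side from which domain preservation can be read off.
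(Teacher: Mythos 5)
Your proposal is correct and follows essentially the same route as the paper: both directions are reduced via Lemma~\ref{lemma dom f;f nelle kleisli} to the $\mC$-identity $f;\nabla_{TY};(T(\id_Y)\otimes T(!_Y));c_{Y,I};T(\rho^{-1}_Y)=f$, with naturality of $c$ identifying the inner factor with the domain preservation equation, and the converse obtained by testing on $\id_{TY}$ viewed as a Kleisli arrow $TY\to Y$ (the paper's $\id_{FY}$). The subtlety you flag about not using $\eta_X$ as the test arrow is exactly the point the paper's choice of test arrow is designed to handle.
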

\begin{proof}
		Let us consider an arrow $f:X\to Y$ in $\mC_T$.
		By Lemma \ref{lemma dom f;f nelle kleisli}, we have that
			\[\dom(f);^{\sharp}f=\nabla_X^{\sharp};^{\sharp} (f\otimes^{\sharp}  f);^{\sharp}(\id\otimes^{\sharp}!^{\sharp}_Y);^{\sharp}(\rho^{-1})^{\sharp}=f;\nabla_{TY};(T(\id_Y)\otimes T(!_Y));c_{Y,I};T(\rho^{-1}).\]
		If $T$ is domain preserving, then
		\[\nabla_{TY};(T(\id_Y)\otimes T(!_Y));c_{Y,I}= \nabla_{TY};c_{Y,Y};(T(\id_Y)\otimes T(!_Y))=T(\rho).\]
		Hence, we have that $\dom(f);^{\sharp}f=f$.

		Vice versa, if $\mC_T$ is a domain category, i.e.\ $\dom(f);^{\sharp}f=f$ for every arrow $f$, in particular, we will have that the equation holds for $\freccia{TY}{f=\id_{TY}}{Y}$ of $\mC_T$. But this means that 
		\[\dom(\id_{TY});^{\sharp}\id_{TY}= \nabla_{TY};(T(\id_Y)\otimes T(!_Y));c_{Y,I};T(\rho^{-1})=\id_{TY}\]
		and hence that 
		\[\nabla_{TY};(T(\id_Y)\otimes T(!_Y));c_{Y,I}=\nabla_{TY};c_{Y,Y};(T(\id_Y)\otimes T(!_Y))=T(\rho).\] 
		Then, we can conclude that the monad is domain preserving.
\end{proof}

The result above implies as a corollary the starting point of our investigation.

\begin{corollary}\label{cor:thm}
Let $\mC$ be a cartesian restriction category and $T$ an affine monad. Then $\mC_T$ is a domain category. 
\end{corollary}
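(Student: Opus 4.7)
The plan is to chain together two results already established in the paper. First I would invoke the Proposition stating that any affine (or relevant) lax symmetric monoidal functor is domain preserving: since $T$ is affine by hypothesis, its underlying endofunctor satisfies the domain preserving square, and by the convention adopted at the start of \Cref{kleisli} (that a symmetric monoidal monad is called domain preserving when its underlying functor is so), this means $T$ itself is a domain preserving monad.

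Then I would apply \Cref{thm: restriction + special implies kl in PR} directly: since $\mC$ is a restriction category and $T$ is a symmetric monoidal monad on $\mC$ which is domain preserving, the Kleisli category $\mC_T$ is a domain category. This gives exactly the claim of the corollary.

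The proof is essentially a two-line composition of existing results, so no real obstacle arises; the only thing to check is that the hypotheses line up cleanly. In particular one should mention explicitly that ``$T$ is affine'' is to be understood in the sense of \Cref{def gsmonoidal functor} applied to the underlying lax symmetric monoidal functor of $T$, which is the standard meaning in this context and which justifies the application of the earlier proposition.
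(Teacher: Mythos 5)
Your proposal is correct and follows exactly the route the paper intends: the corollary is stated immediately after \Cref{thm: restriction + special implies kl in PR} and is obtained by combining that theorem with the earlier proposition that every affine (or relevant) lax symmetric monoidal functor is domain preserving. No gaps; the hypotheses line up as you describe.
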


Now we move to the key result concerning mass preservation.

\begin{theorem}
Let $\mC$ be a cartesian restriction category and  $T$ a symmetric monoidal monad on $\mC$. Then $T$ is mass preserving if and only if  $\mC_T$ is a mass category.
\end{theorem}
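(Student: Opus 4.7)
The plan is to mirror the structure of the proof of Theorem~\ref{thm: restriction + special implies kl in PR}, using Lemma~\ref{lemma dom f;f nelle kleisli} to reduce the mass-category equation $\dom(f);^{\sharp}\mass(f)=\mass(f)$ in $\mC_T$ to a single equation about $T$ in $\mC$. By the lemma, $\dom(f);^{\sharp}\mass(f)$ unfolds in $\mC$ to $f;\nabla_{TY};(T(!_Y)\otimes T(!_Y));c_{I,I};T(\rho^{-1})$, while $\mass(f)=f;^{\sharp}!^{\sharp}_Y$ unfolds to $f;T(!_Y)$ (using $!^{\sharp}_Y$ applied to an arrow of $\mC_T$ and the unit law of the monad). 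So the mass axiom in $\mC_T$ amounts to requiring $f;T(!_Y)=f;\nabla_{TY};(T(!_Y)\otimes T(!_Y));c_{I,I};T(\rho^{-1})$ for every Kleisli arrow $f$.

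For the ``if'' direction, I will assume that $T$ is mass preserving and exploit naturality of the strength $\psi=c$, which yields $(T(!_Y)\otimes T(!_Y));c_{I,I}=c_{Y,Y};T(!_Y\otimes !_Y)$. The mass-preservation square then identifies $\nabla_{TY};c_{Y,Y};T(!_Y\otimes !_Y)$ with $T(!_Y);T(\rho_I^{-1})$, and post-composing with $T(\rho_I)$ to cancel the unitor produces the common factor $T(!_Y)$. Pre-composing with an arbitrary $f$ concludes this direction.

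For the ``only if'' direction, just as in the previous theorem, it is enough to specialise to a test arrow that cancels the leading $f$. I would take $f=\id_{FY}\colon FY\to Y$ in $\mC_T$, which corresponds to $\id_{TY}$ in $\mC$. The mass-category axiom applied to this $f$ then reduces to $T(!_Y)=\nabla_{TY};(T(!_Y)\otimes T(!_Y));c_{I,I};T(\rho_I)$; post-composing with $T(\rho_I^{-1})$ and applying naturality of $c$ in the reverse direction recovers $\nabla_{TY};c_{Y,Y};T(!_Y\otimes !_Y)=T(!_Y);T(\rho_I^{-1})$, which is exactly the defining equation of mass preservation.

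I do not foresee any real obstacle beyond careful bookkeeping of $\rho_I$ versus $\rho_I^{-1}$, since Lemma~\ref{lemma dom f;f nelle kleisli} already encapsulates the delicate translation between the Kleisli calculus and the base-category calculus. The proof is structurally parallel to the domain case, which is why it essentially reuses the same template with the discharger $!_Y$ placed uniformly on both legs in place of $\id_Y$ on one of them.
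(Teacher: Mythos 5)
Your proposal is correct and follows essentially the same route as the paper: both directions reduce the Kleisli equation to a base-category equation via Lemma~\ref{lemma dom f;f nelle kleisli} and test the converse on the arrow of $\mC_T$ corresponding to $\id_{TY}$, the only cosmetic difference being that you invoke the second displayed identity of the lemma directly, whereas the paper post-composes the first identity with $!^{\sharp}_Y$. The sole slip is the orientation of the unitor (under the paper's convention $\rho_X : X \to X\otimes I$, the mass-preservation square gives $T(!_Y);T(\rho_I)$, which cancels against the $T(\rho_I^{-1})$ already present in the lemma's formula rather than requiring an extra post-composition), exactly the bookkeeping point you flagged, and it does not affect the argument.
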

\begin{proof}
	By Lemma \ref{lemma dom f;f nelle kleisli}, we have that 
	\[\dom(f);^{\sharp}f=\nabla_X^{\sharp};^{\sharp} (f\otimes^{\sharp}  f);^{\sharp}(\id\otimes^{\sharp}!^{\sharp}_Y);^{\sharp}(\rho^{-1})^{\sharp}=f;\nabla_{TY};(T(\id_Y)\otimes T(!_Y));c_{Y,I};T(\rho^{-1})\]
and hence 
\[\dom(f);^{\sharp}f;^{\sharp}!^{\sharp}_Y=f;\nabla_{TY};(T(\id_Y)\otimes T(!_Y));c_{Y,I};T(\rho^{-1}); T(!_Y)\]

Now, if $T$ is mass preserving, then
		\[\nabla_{TY};(T(\id_Y)\otimes T(!_Y));c_{Y,I};  T(!_Y)=T(!_Y)\]
		Hence, we have that
		\[\dom(f);^{\sharp}f;^{\sharp}!_Y^{\sharp}=f;^{\sharp}!_Y^{\sharp}\]

		One can prove the converse by using the same argument as in Theorem \ref{thm: restriction + special implies kl in PR}.
\end{proof}

\begin{corollary}
	Let $\mC$ be a cartesian restriction category and $T$ a symmetric monoidal monad on $\mC$. 
	Then $T$ is unital domain preserving if and only if  $\mC_T$ is a mass category.
\end{corollary}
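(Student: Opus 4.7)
The plan is to derive this corollary directly by chaining two earlier results: the preceding theorem (which characterises mass preservation for a monad on a restriction category in terms of its Kleisli category being a mass category) and the earlier lemma in \Cref{presFunctors} stating that, for a lax symmetric monoidal functor $F : \mC \to \mD$ with $\mD$ a restriction category, $F$ is mass preserving if and only if it is unital domain preserving. No new computations should be required.

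Concretely, I would proceed as follows. First, recall from the discussion opening \Cref{kleisli} that a symmetric monoidal monad $T$ is, by definition, unital domain preserving (respectively mass preserving) precisely when its underlying functor is. Second, since $T$ is a monad on $\mC$, the codomain of the underlying functor $T$ is $\mC$ itself, which by hypothesis is a restriction category. Therefore the aforementioned lemma applies to $F = T$, giving that $T$ is mass preserving if and only if $T$ is unital domain preserving. Third, by the preceding theorem, $T$ is mass preserving if and only if $\mC_T$ is a mass category. Composing these two equivalences yields the desired biconditional.

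The main potential pitfall, though minor, is making sure that the convention for what it means for a monad to be unital domain preserving agrees on both sides of the equivalence: this is already pinned down by the sentence introducing the notion at the start of the section, so no ambiguity arises. Since both ingredients are already stated in the paper, the proof is essentially a one-line diagram chase of equivalences and I would present it as such.
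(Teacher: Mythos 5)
Your proposal is correct and follows exactly the route the paper intends: the corollary is obtained by composing the preceding theorem ($T$ mass preserving $\iff$ $\mC_T$ is a mass category) with the earlier lemma that, when the codomain is a restriction category, mass preservation and unital domain preservation coincide, the codomain of $T$ being $\mC$ itself. Your check that the hypothesis on $\mC$ makes the lemma applicable is the only point that needs verifying, and you handle it correctly.
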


\subsection{Weakly Markov vs domain preservation}
A recent addition to the taxonomy surveyed in~\cite{cioffogadduccitrottataxonomy} are weakly Markov categories,  which are intermediate between Markov and GSM categories~\cite{FritzGPT23}. 
In this section we explain how such a notion interacts with the (unital) domain preservation property.
\begin{definition}\label{defweaklymarkov}
Let $\mC$ be a GSM category. We say that it is \emph{weakly Markov} if for every object $Y$, the commutative monoid 
$\langle  \mC(Y,I), \nabla_Y \comp ( - \otimes -)\comp \rho^{-1}_I, !_Y \rangle$ is a group. 
\end{definition}
\begin{remark}\label{rem: markov are weaklymarkov}
	A Markov category is weakly Markov: the commutative monoid $\langle  \mC(Y,I), \nabla_Y \comp ( - \otimes -)\comp \rho^{-1}_I, !_Y \rangle$ is trivial, since in a Markov category the hom-set
$\mC(Y,I)$ is the singleton for any object $Y$, see \cite{FritzGPT23}.
\end{remark}

In other words, there is a function $(-)^{-1}: \mC(Y,I) \to \mC(Y,I)$ satisfying the obvious equations.
Now, by Lemma~\ref{cansem}
we know that $\langle I, \rho^{-1}_I, \id_I \rangle$ is a commutative monoid object
in $\mC$ and by Proposition~\ref{mon-pres} that for a symmetric monoidal monad $T$
on $\mC$ the same holds for $\langle T(I), \psi_{I,I}; T(\rho^{-1}_I), \psi_0 \rangle$.

\begin{definition}\label{defweaklyaffine}
Let $\mC$ and $ {\mD}$ be GSM categories and 
$\freccia{\mC}{F}{\mD}$ a lax symmetric monoidal functor with structural arrows
$\psi_{X,Y}$ and $\psi_0$. We say that $F$ is \emph{weakly affine} if the commutative monoid object
$\langle F(I), \psi_{I,I}; F(\rho^{-1}_I), \psi_0 \rangle$ in $\mD$ is a group.
\end{definition}

For a monoid object $(X, \Delta_X , \cobang_X)$ being a group object in a GSM category means the existence of an inverse arrow $\iota_X: X \to X$ such that
$\nabla_X ; (\id_X \otimes \iota_X) ; \Delta_X = !_X ; \cobang_X$, which is to say that $\iota_X$ is the antipode of a Hopf monoid object $X$.

\begin{remark}
The observation above suggests that, for a lax symmetric monoidal functor $F$,
the condition of being weakly affine is equivalent to the commutativity of the diagram below for $X = I$.
\[
\begin{tikzcd}
	{F(X)\otimes F(X)} & {F(X)\otimes F(X)} & {F(X)} & I & F(I) \\
	{F(X \otimes X)} & & & & F(I \otimes I)
	\arrow["{\psi_{X,X}}"', from=1-1, to=2-1]
	\arrow["{\id_{F(X)} \otimes \iota}"', from=1-2, to=1-1]
	\arrow["{\nabla_{F(X)}}"', from=1-3, to=1-2]
	\arrow["{!_{F(X)}}", from=1-3, to=1-4]
	\arrow["{\psi_0}", from=1-4, to=1-5]
	\arrow["{F(\id_X \otimes !_X)}"', from=2-1, to=2-5]
	\arrow["{F(\rho_{I})}", from=1-5, to=2-5]
\end{tikzcd}\]
\end{remark}

We now restate the key property for weakly affine monads (see~\cite[Proposition~3.6]{FritzGPT23}).

\begin{proposition}
Let $\mC$ be a cartesian monoidal category and $T$ a symmetric monoidal monad on $\mC$. Then $T$ is weakly affine if and only if $\mC_T$ is weakly Markov.
\end{proposition}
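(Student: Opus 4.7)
The plan is to reduce the biconditional to a representable characterisation of group objects, leveraging the identification $\mC_T(Y,I) = \mC(Y, T(I))$ together with Propositions~\ref{mon-pres} and~\ref{semihom}.

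First, I would verify that under this identification the commutative monoid structure on $\mC_T(Y,I)$ induced by the weakly Markov multiplication of $\mC_T$ coincides with the one induced on $\mC(Y, T(I))$ by the commutative monoid object $\langle T(I), \psi_{I,I}\comp T(\rho^{-1}_I), \psi_0\rangle$ of Proposition~\ref{mon-pres}. The GSM structure of $\mC_T$ is given by $\nabla_Y^\sharp = \nabla_Y\comp\eta_{Y\otimes Y}$ and $!_Y^\sharp = !_Y\comp\eta_I$; a direct calculation using the Kleisli composition formula, the Kleisli tensor $f\otimes^\sharp g = (f\otimes g)\comp\psi$, the coherence $\eta_I = \psi_0$ for symmetric monoidal monads, and the monad unit law $T(\eta)\comp\mu = \id$ shows that the weakly Markov product of $f, g\colon Y\to T(I)$ equals
$$
\nabla_Y\comp (f\otimes g)\comp\psi_{I,I}\comp T(\rho^{-1}_I),
$$
with unit $!_Y\comp\psi_0$, which is exactly the form described in Proposition~\ref{semihom}.

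Second, I would promote this identification from monoids to groups by a short Yoneda-style argument. If $T(I)$ admits an antipode $\iota\colon T(I)\to T(I)$ in $\mC$, then postcomposition with $\iota$ provides inverses in each $\mC(Y, T(I))$, obtained by precomposing the antipode equation with $f$. Conversely, if $\mC(Y, T(I))$ is a group for every $Y$, setting $Y = T(I)$ and defining $\iota := (\id_{T(I)})^{-1}$ yields an antipode for $T(I)$, since the inverse equation in the monoid $\mC(T(I), T(I))$ is literally the antipode equation for $\iota$. Chaining these gives the equivalences: $T$ is weakly affine iff $T(I)$ is a group object in $\mC$ iff $\mC(Y, T(I))$ is a group for every $Y$ iff $\mC_T$ is weakly Markov. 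The cartesian monoidal hypothesis on $\mC$ ensures that $\mC_T$ inherits a clean GSM structure and that $\mC(-, I)$ is trivial, so the only representable that needs analysing is $\mC(-, T(I))$.

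The main obstacle I anticipate is the first step: tracking the Kleisli composition, the Kleisli tensor, and the symmetric monoidal monad coherences carefully enough to recognise the weakly Markov product in $\mC_T(Y,I)$ as the monoid product induced by $T(I)$ via Proposition~\ref{mon-pres}. Once this calculation is done, the group-theoretic promotion and the final chain of equivalences follow with little additional effort.
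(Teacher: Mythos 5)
Your proposal is correct in substance, but note that the paper itself does not prove this proposition: it is explicitly a restatement of \cite[Proposition~3.6]{FritzGPT23}, so there is no internal proof to match against. Your argument is essentially the standard one behind that cited result. The first step checks out: unwinding the Kleisli composition with $\eta$-naturality, the unit laws, and $\psi_0=\eta_I$ shows that the weakly Markov multiplication on $\mC_T(Y,I)=\mC(Y,T(I))$ is exactly $\nabla_Y\comp(f\otimes g)\comp\psi_{I,I}\comp T(\rho^{-1}_I)$ with unit $!_Y\comp\psi_0$, i.e.\ the hom-set monoid that \Cref{semihom} associates to the monoid object $\langle T(I),\psi_{I,I}\comp T(\rho^{-1}_I),\psi_0\rangle$ of \Cref{mon-pres}. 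The converse half of your Yoneda-style promotion is also clean: the inverse of $\id_{T(I)}$ in the group $\mC_T(T(I),I)=\mC(T(I),T(I))$ satisfies literally the antipode equation, so \Cref{defweaklyaffine} holds. One point deserves sharper justification in the forward half: to see that $f\comp\iota$ inverts $f$, you precompose the antipode equation with $f$ and must then identify $f\comp\nabla_{T(I)}\comp(\id\otimes\iota)\comp\Delta_{T(I)}$ with $m_{Y,T(I)}(f,f\comp\iota)$ and $f\comp{!_{T(I)}}\comp\psi_0$ with $e_{Y,T(I)}$; this needs $f\comp\nabla_{T(I)}=\nabla_Y\comp(f\otimes f)$ and $f\comp{!_{T(I)}}={!_Y}$, i.e.\ copyability and totality of $f$. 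That is the actual place where the cartesian monoidal hypothesis on $\mC$ is used --- it makes precomposition a homomorphism of the hom-set monoids, so that the group structure on the representing object transports to every $\mC(Y,T(I))$ --- whereas your write-up attributes the hypothesis only to the GSM structure of $\mC_T$ and the triviality of $\mC(-,I)$. With that clarification the proof is complete.
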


However, in the framework of domain categories 
we can be much more nuanced and actually \emph{decompose} affine monads.
%

\begin{proposition}\label{prop:markov iff weakly affine and unital domain preserving}
Let $\mC$ and ${\mD}$ be GSM categories and 
$\freccia{\mC}{F}{\mD}$ a lax symmetric monoidal functor. 
If ${\mD}$ is a Markov category then
$F$ is affine if and only if it is weakly affine and unital domain preserving.
%
\end{proposition}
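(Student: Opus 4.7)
The plan is to reduce both directions to showing that $\psi_0: I \to F(I)$ is an isomorphism; once established, the affine equation for arbitrary $X$ will follow formally from terminality of $I$ in $\mD$.

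For the forward direction, affine implies domain preserving by the earlier proposition, hence also unital domain preserving. For weak affineness, applying the affine equation at $X = I$ (using $!_I = \id_I$) gives $!_{F(I)}; \psi_0 = \id_{F(I)}$, while terminality of $I$ in $\mD$ forces $\psi_0 ; !_{F(I)} = \id_I$; hence $\psi_0$ is an iso and the commutative monoid $\langle F(I), \mu, \psi_0 \rangle$ (with $\mu = \psi_{I,I}; F(\rho^{-1}_I)$) is isomorphic to the trivial monoid on $I$, in particular a group.

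For the reverse direction, the more substantive step, I would first simplify the unital domain preservation diagram (exploiting $!_I = \id_I$) and post-compose with $F(\rho^{-1}_I)$ to extract the specialness equation $\nabla_{F(I)} ; \mu = \id_{F(I)}$. Next, I would invoke the group-analogue of Proposition~\ref{semihom}: a group object $\langle F(I), \mu, \psi_0 \rangle$ in $\mD$ endows every hom-set $\mD(Y, F(I))$ with a convolution group structure whose unit is $!_Y; \psi_0$ and whose multiplication sends $(f,g)$ to $\nabla_Y; (f \otimes g); \mu$. Taking $Y = F(I)$, the specialness becomes $\id \cdot \id = \id$, and the elementary cancellation $a \cdot a = a \Rightarrow a = e$ in a group yields $\id_{F(I)} = !_{F(I)} ; \psi_0$. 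Combined with $\psi_0; !_{F(I)} = \id_I$ from terminality, $\psi_0$ is again an iso.

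With $\psi_0$ an isomorphism, the affine equation follows directly: for any $X$, we have $F(!_X) = F(!_X) ; !_{F(I)} ; \psi_0 = !_{F(X)} ; \psi_0$, using $!_{F(I)}; \psi_0 = \id_{F(I)}$ and terminality of $I$. The main obstacle I anticipate is the reverse direction: specifically, the insight that unital domain preservation packages exactly the specialness needed to trigger the cancellation $a^2 = a \Rightarrow a = e$ in the convolution hom-group. Once this pattern is identified, everything else is routine bookkeeping.
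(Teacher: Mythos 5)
Your proof is correct and follows essentially the same route as the paper's: both directions reduce to showing that $!_{F(I)};\psi_0=\id_{F(I)}$ by combining the specialness equation extracted from unital domain preservation with the antipode supplied by weak affineness, via the observation that an idempotent element of a group is the unit (the paper runs this string-diagrammatically, you run it in the convolution monoid on $\mD(F(I),F(I))$). One small caveat: a group object does not in general make every hom-set $\mD(Y,F(I))$ a convolution \emph{group} (checking that $f;\iota$ inverts $f$ requires $f$ to be copyable), but the only element you need to cancel is $\id_{F(I)}$, whose convolution inverse is $\iota$ directly by the group-object axiom, so your cancellation step goes through unchanged.
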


\begin{proof}
	The assumptions of weakly affine and unital domain preservation imply that, for the monoid $F(I)$ with  multiplication $m= \psi_{I,I};F(\rho_I^{-1})$ and the inverse arrow $\iota$, it holds
	\[\tikzfig{propweaklyaffine2}\]
	Hence, we obtain
	\[\tikzfig{propweaklyaffine1}\]
	where the second equality follows by associativity of $m$ and the third by the associativity of $\nabla_{F(I)}$. Hence, since $\psi_0$ is the unit of the group $\langle F(I), m, \psi_0 \rangle$, and $!_{F(I)}$ is the unit of the monoid $\langle F(I), \nabla_{F(I)}, !_{F(I)} \rangle$, we obtain from the last diagram above that
	$!_{F(I)};\psi_0=\id_{F(I)}$. The assumption that $\mD$ is Markov implies that $\psi_0;!_{F(I)}=\id_I$, hence $F(I)\cong I$ and $F$ is affine. 
\end{proof}

%
%

Similarly, we can decompose Markov categories.

\begin{lemma}\label{idempmon}
Let $\mC$ be a GSM category. It is a mass category if and only if for every object $Y$, the commutative monoid 
$\langle  \mC(Y,I), \nabla_Y \comp ( - \otimes -)\comp \rho^{-1}_I, !_Y \rangle$ is idempotent. 
\end{lemma}

\begin{theorem}
Let $\mC$ be a GSM category. Then $\mC$ is a Markov category if and only if it is a mass category and a weakly Markov category.
\end{theorem}
\begin{proof}
	By Remark~\ref{rem: markov are weaklymarkov}, every Markov category is weakly Markov, and by Lemma~\ref{lemma:PR_subcategories} every Markov category is a mass category. Vice versa, if $\mC$ is a mass category and a weakly Markov category, then by Definition~\ref{defweaklymarkov} and by Lemma~\ref{idempmon}
	the commutative monoid $\langle \mC(Y,I), \nabla_Y \comp ( - \otimes -)\comp \rho^{-1}_I, !_Y \rangle$ is an idempotent group for every object $Y$. Thus $\mC(Y,I)$ is a trivial group, i.e. a singleton,  for every object $Y$, and hence $I$ is the terminal object of $\mC$.
\end{proof}

\section{A case study: semiring-weighted relations}
\label{WR}
\label{example: monad semiring}
In \cite[Section 2.3.2]{cioffogadduccitrottataxonomy} we tackled the issue of characterising instances of the semiring monad 
such that the associated Kleisli categories are either Markov or cartesian restriction categories.
We recall those results, showing how the newly introduced monads are characterised 
by the properties of the underlying semiring.

\subsection{Some facts about the semiring monad}
Consider a (commutative, distributive and unitary) semiring $(M,\oplus,\odot, 0, 1)$, i.e.  such that 
$(M,\oplus,0)$ and $(M,\odot, 1)$ are commutative monoids
and moreover  $\forall m, n, o \in M.\, m \odot (n \oplus o) = (m \odot n) \oplus (m \odot o)$. 

The well-known endofunctor
$\mathcal{M}:\mathbf{Set}\to\mathbf{Set}$ sends a set $X$ to 
 \[\mathcal{M}(X)=\left\{ h:X \to M \ |\  h\  \text{has finite support}\right\}\]
where finite support means that $h(x)\neq 0$  for a finite number of elements $x\in X$, and a function $f:X\to Y$ to the function $\tilde{f}:\mathcal{M}(X)\to \mathcal{M}(Y)$ 
mapping a $M$-valued function $h:X\to M$ with finite support 
to 
\[\tilde{f}(h)(y)= \underset{x\in f^{-1}(y)}{\bigoplus} h(x).\]

Recall that $(\mathbf{Set}, \times, \{\bullet\})$ is cartesian monoidal with respect to the cartesian product.
The above functor is lax symmetric monoidal with respect to that monoidal structure, with the  
coherence arrows 
\[\psi_{X,Y}:\mathcal{M}(X)\times \mathcal{M}(Y)\to \mathcal{M}(X\times Y)
\qquad\psi_0:\{\bullet\}\to \mathcal{M}(\{\bullet\})\]
given by $\psi_{X,Y}(h,k)(x,y)=h(x)\odot k(y)$ and $\psi_0(\bullet)(\bullet)=1$.
This functor defines a symmetric monoidal monad on $(\mathbf{Set}, \times, \{\bullet\})$,
sometimes called the \emph{semiring monad}, 
with natural transformations $\eta_X: X\to \mathcal{M}(X)$ and $\mu_X:\mathcal{M}(\mathcal{M}(X))\to\mathcal{M}(X)$ given by
\[
\eta_X(x_0)(x)=\begin{cases}
		1\ \text{if}\ x=x_0\\
		0\ \text{otherwise}
	\end{cases}
	\qquad
\mu_X(\lambda)(x)=\underset{h\in\mathcal{M}(X)}{\bigoplus}\lambda(h)\cdot h(x)
\]

\begin{remark}\label{power}
There are two other well-known endofunctors $\mathcal{M}_r$ and $\mathcal{M}_a$ on $\mathbf{Set}$ 
that are both lax symmetric monoidal with respect to the same monoidal structure $(\mathbf{Set}, \times, \{\bullet\})$. These are defined as follows
\[\mathcal{M}_r(X)=\left\{ h:X \to M \ |\  h\  \text{has support at most one and $\forall x \in X.\, h(x) = h(x) \odot h(x)$}\right\}\]
\[\mathcal{M}_a(X)=\left\{ h:X \to M \ |\  h\  \text{has finite support and $\bigoplus_{x \in X}h(x) = 1$}\right\}\]
%
Note that
$\mathcal{M}_r$ is relevant and $\mathcal{M}_a$ is affine, 
as it is easily checked out noting that $\widetilde{!_X}(h) (\bullet) =  \bigoplus_{x \in X}h(x)$ and 
$\widetilde{\nabla_X}(h) (\langle x, y \rangle) = h(x)$ if $x = y$, and $0$ otherwise.

The setting allows to recover various relation-like structures in the literature.
	If $M$ is the Boolean semiring $\{0,1\}$, then $\mathcal{M}$ is the lax symmetric monoidal 
	functor $\mathcal{P}$ associating to $X$ its finite subsets, and it is neither relevant nor affine.
	The relevant functor $\mathcal{P}_r$ is restricted to subsets of at most one element,
	while the affine functor $\mathcal{P}_a$ is restricted to subsets with at least one element. 
	If $M$ is the semiring of positive real numbers $\mathbb{R}^+$, then $\mathcal{M}$ is the monad of 
	unnormalised probability distributions, the affine functor $\mathcal{M}_a$
	is the monad of normalised probability distributions, while the relevant functor
	$\mathcal{M}_r$ associates to $X$ the functions with codomain $\{0, 1\}$, behaving
	essentially as $\mathcal{P}_r$.
\end{remark}

\subsection{Domain and mass preservation for the semiring monad}
Consider now the case in which the semiring $M$ is (multiplicative) idempotent, i.e. $\forall m \in M.\, m = m \odot m$ 
and absorptive, i.e. $\forall m \in M.\, 1 = m \oplus 1$. Then we obtain the following results for the monad $\mathcal{M}$.

\begin{proposition}
	Let $M$ be an absorptive and idempotent semiring. Then $\mathcal{M}$ is domain preserving.
\end{proposition}
\begin{proof}
In the definition of domain preservation, observe that
$\widetilde{\rho^{-1}_X}(h) (x) = h(x, \bullet)$.
Hence the canonical semigroup arrow for $\mathcal{M}(X)$ is given by the composition
$$\Delta_{\mathcal{M}(X)} = \widetilde{\rho^{-1}_X} \circ \widetilde{\id_X \otimes !_X} \circ \psi_{X, X}$$
which sends a pair of functions $h, k: X \to M$ to the function $\Delta_{\mathcal{M}(X)}(h, k): X \to M$ defined as
$\Delta_{\mathcal{M}(X)}(h, k) (x) = h(x) \odot \bigoplus_{x \in X}k(x)$.
Thus, we now have that 
$$\Delta_{\mathcal{M}(X)} \circ \nabla_{\mathcal{M}(X)} (h) (x) = h (x) \odot \bigoplus_{x \in X}h(x)$$
which is equal to $h(x)$ since $M$ absorptive and idempotent is equivalent to 
$\forall m, n \in M.\, n = n \odot (m \oplus n)$. 
\end{proof}

\begin{proposition}
	Let $M$ be an idempotent semiring. Then $\mathcal{M}$ is mass preserving.
\end{proposition}
\begin{proof}
	 As in the above computation, if $X = \{\bullet\}$, we would have 
$$\Delta_{\mathcal{M}(\{\bullet\})} \circ \nabla_{\mathcal{M}(\{\bullet\})} (h) (\bullet) = h(\bullet) \odot h(\bullet)$$
So, if $M$ is idempotent, we have $h(\bullet) \odot h(\bullet) = h(\bullet)$, which shows that $\mathcal{M}$ is unital domain preserving. Hence, since $\mathbf{Set}$ is a cartesian monoidal category,
Lemma~\ref{lemma:mass implies unital domain} implies that it is also mass preserving.
\end{proof}

\begin{example}
Absorptive and idempotent semirings are known to coincide with bounded and distributive
lattices, as obtained by the canonical order (i.e. $m \leq n$ if $m \oplus n = n$).
Consider now the Boolean semiring 
from Remark~\ref{power}. 
It is a bounded and distributive lattice, so that even though the lax symmetric monoidal  functor $\mathcal{P}$ 
associating to $X$ its finite subsets is neither relevant nor affine, it is domain preserving. 
Thus, also the associated Kleisli category of relations 
is a domain category.
Also the semiring $([0,1],\text{max},\text{min}, 0, 1)$ is a bounded and distributive lattice, hence the 
corresponding monad $\mathcal{M}$ is domain preserving,
thus the associated Kleisli category is one of Golubtsov's categories of \textit{fuzzy information transformers}.
\end{example}

\begin{example}
There are well-known examples of idempotent semirings that are 
not absorptive. 

Consider e.g. a semiring which is additive idempotent, or equivalently, such that $1\oplus1 = 1$.
Semirings whose both operators are idempotent are in fact distributive lattices.
The simplest example is the semiring with three elements
$\{0, 1, \top\}$  such that besides $1 \oplus 1 = 1$ it satisfies
$1 \oplus \top = \top = \top \odot \top$.

Consider now a semiring with characteristic 2, or equivalently, such that $1\oplus1 = 0$. A key example
is $\mathbb{Z}_2$, the semiring of integers module 2. In fact, semirings 
with characteristic 2 are rings, and if multiplicative idempotent,
they are known as Boolean rings, since
$\oplus$ can be interpreted as the XOR operator.

Also, note that these two classes are disjoint.
In fact, for a semiring being of characteristic 2 and additive idempotent
implies that it contains only one element, i.e. $0 = 1$.

\end{example}

\subsection{About weakly affine functors}
The pattern developed in the previous sections can be instantiated to obtain a large class
of weakly affine monads, extending the key example in~\cite[Example~3.7]{FritzGPT23}
\[\mathcal{M}_i(X)=\left\{ h:X \to M \ |\  h\  \text{has non-empty finite support and $\forall x \in X.\, (h(x) = 0) \vee (\exists m \in M. h(x) \odot m = 1)$}\right\}\]
For every $x$, the $m$ above is unique and we denote it by $\widehat{h(x)}$.
Also, $\mathcal{M}_i$ is lax symmetric monoidal with respect to $(\mathbf{Set}, \times, \{\bullet\})$,\
yet it is not affine, since $\mathcal{M}_i(I)$ is not isomorphic to $I = \mathcal{M}_a(I)$.

Now, for the arrow $\iota_{\mathcal{M}_i(I)}(I): \mathcal{M}_i(I) \to \mathcal{M}_i(I)$ the equations below must coincide
$$\Delta_{\mathcal{M}(I)} \circ (\id_{\mathcal{M}_i(I)} \otimes \iota_{\mathcal{M}_i(I)}) \circ \nabla_{\mathcal{M}(I)} (h) (\{\bullet\}) = 
h(\{\bullet\}) \odot \iota_{\mathcal{M}_i(I)}(h)(\{\bullet\})$$
$$\cobang_{\mathcal{M}(I)} \circ !_{\mathcal{M}(I)}(h)(\{\bullet\}) = 1$$
which is ensured by defining $\iota_{\mathcal{M}_i(I)}(h)(\{\bullet\}) = \widehat{h(\{\bullet\})}$.

\begin{remark}
The condition on being invertible holds automatically if $M$ is a semifield, i.e. if each element except $0$ has a multiplicative inverse.
In that case, $\mathcal{M}_i(X)$ includes all the functions with finite support with domain $X$, except for the empty function.
This is, e.g., the case for the semiring of positive real numbers $\mathbb{R}^+$ in Remark~\ref{power}, thus recovering the 
monad of unnormalised, non-empty probability distributions.

Also, note that if $M$ is a semifield, then $\mathcal{M}_a$ is a sub-monad of $\mathcal{M}_i$.
\end{remark}

\section{A case study: Partialisation of Markov categories}
\label{parMarkov}

In \cite{empirical,mohammed2025partializationsmarkovcategories}, the authors provide some examples of domain categories, arising as the partialisation of suitable Markov categories. {In this section, we focus on $\mathrm{Partial}(\mathrm{FinStoch})$ and provide an alternative proof of the fact that it is a domain category, using the tools developed in the previous sections.} Before proceeding, recall that $\mathrm{Partial}(\mathrm{FinStoch})$
	 is the category whose objects are finite sets and whose arrows $X\to Y$ are equivalence classes of spans
	$i: D \hookrightarrow X$, $f: D \to Y$
	where $i$ is a copyable monomorphism in $\mathrm{FinStoch}$, and $f$ is an arbitrary morphism in $\mathrm{FinStoch}$.
	
	Two arrows are equivalent if there exists an isomorphism making the diagram below on the left commute,
	and  composition is given by the equivalence class of the span $u;i,v;g$ obtained by taking the pullback below on the right
\[\begin{tikzcd}[column sep=small, row sep=small]
	& {D_f} & \\
	X && Y \\
	& {D_g}
	\arrow["i"', hook', from=1-2, to=2-1]
	\arrow["f", from=1-2, to=2-3]
	\arrow["\cong"{description}, tail reversed, from=1-2, to=3-2]
	\arrow["j", hook, from=3-2, to=2-1]
	\arrow["g"', from=3-2, to=2-3]
\end{tikzcd}
\begin{tikzcd}[column sep=small, row sep=small]
	&& E && \\
	& {D_f} && {D_g} \\
	X && Y && Z
	\arrow["u"', hook', from=1-3, to=2-2]
	\arrow["v", from=1-3, to=2-4]
	\arrow["\lrcorner"{anchor=center, pos=0.125, rotate=-45}, draw=none, from=1-3, to=3-3]
	\arrow["i"', hook', from=2-2, to=3-1]
	\arrow["f", from=2-2, to=3-3]
	\arrow["j"', hook', from=2-4, to=3-3]
	\arrow["g", from=2-4, to=3-5]
\end{tikzcd}\]
In the following, we denote by $\mathcal{D}:\mathbf{Set}\to \mathbf{Set}$ the distribution monad, i.e. 
the semiring monad $\mathcal{M}_a$ for the semiring of positive real numbers $\mathbb{R}^+$, 
with the corresponding monad structure.

\noindent
\begin{minipage}[l]{.85\linewidth}
\hspace{.4cm}
Since copyable monomorphisms in $\mathrm{FinStoch}$ are exactly arrows of the form $i;\eta:X\to \mathcal{D}(Y)$, where $i:X\to Y$ is a monomorphism in $\mathbf{Set}$ and $\eta_Y:Y\to \mathcal{D}(Y)$ is the unit of the distribution monad, which sends $y$ to $\delta_y$, then the pullback of a deterministic monomorphism along an arbitrary morphism in $\mathrm{FinStoch}$ is again a copyable monomorphism. 
For instance, in the above diagram, the pullback aside
is given by the set $E=\{x\in D_f\vert supp(f(-\vert x))\subseteq D_g\}$ and the arrow  $v:E\to D_g$ is the restriction of $f$ to $E$.
\end{minipage}
\begin{minipage}[r]{.20\linewidth}
\begin{tikzcd}
	E & {D_g} \\
	{D_f} & Y
	\arrow["v", from=1-1, to=1-2]
	\arrow["u"', hook, from=1-1, to=2-1]
	\arrow["\lrcorner"{anchor=center, pos=0.125}, draw=none, from=1-1, to=2-2]
	\arrow["j", hook, from=1-2, to=2-2]
	\arrow["f", from=2-1, to=2-2]
\end{tikzcd}
\end{minipage}

The existence of a distributive law $\gamma: \mathcal{D}(-+1)\to \mathcal{D}(-)+1$, for the distribution monad $\mathcal{D}:\mathbf{Set}\to\mathbf{Set}$ and the maybe monad $(-)+1:\mathbf{Set}\to\mathbf{Set}$ (see for instance \cite{Sokolova_2018,mio2021combining,Bonchi_2022}) ensures that $\mathcal{D}(-)+1$ is a monad  on $\mathbf{Set}$ and $\mathcal{D}$ lifts to a monad $\mathcal{D}_\mathbf{Par}$ on $\mathbf{Par}=\mathbf{Set}_{(-)+1}$, the Kleisli category of the maybe monad. Moreover, it follows by the theory of distributive laws that the two Kleisli categories are isomorphic: $\mathbf{Par}_{\mathcal{D}_\mathbf{Par}}\cong \mathbf{Set}_{\mathcal{D}+1}$.\\

 We now provide an alternative characterisation of $\mathrm{Partial}(\mathrm{FinStoch})$.
\begin{proposition}
The category $\mathrm{Partial}(\mathrm{FinStoch})$ is equivalent to the Kleisli category $\mathbf{Par}_{\mathcal{D}_\mathbf{Par}}$ of the lifted monad $\mathcal{D}_\mathbf{Par}$ on $\mathbf{Par}$.
\end{proposition}
\begin{proof}
	The objects of $\mathbf{Par}_{\mathcal{D}_\mathbf{Par}}$ are finite sets, while the arrows $X\to Y$ are given by arrows $X\to \mathcal{D}(Y)+1$ in $\mathbf{Set}$, i.e. partial functions  $X\to \mathcal{D}(Y)$. Hence, consider the functor $F:\mathbf{Par}_{\mathcal{D}_\mathbf{Par}}\to \mathrm{Partial}(\mathrm{FinStoch})$ which sends an object $X$ to itself and an arrow $f:X\to \mathcal{D}(Y)+1$ to the equivalence class of the span where $D_f=\{x\in X\vert f(x)\neq \mathrm{inr}(\bullet)\}$, $i$ is the inclusion of $D_f$ into $X$ and $f$ is the restriction of $f$ to ${D}_f$. This association is functorial. Indeed, the identity arrow $\id_X:X\to \mathcal{D}(X)+1$ sends $x$ to $\mathrm{inl}(\delta_x)$, hence $D_{\id_X}=X$ and the span associated to $\id_X$ is given by the identity span on $X$. Moreover, given two arrows $f:X\to \mathcal{D}(Y)+1$ and $g:Y\to \mathcal{D}(Z)+1$, the composition is given by the function $f;g:X\to \mathcal{D}(Z)+1$ which sends $x$ to $\mathrm{inr}(\bullet)$ if $f(x)=\mathrm{inr}(\bullet)$ or if $f(x)(y)\neq 0$ for some $y$ such that $g(y)=\mathrm{inr}(\bullet)$, or equivalently if $\mathcal{D}(g)(f(x))(\mathrm{inr}(\bullet))\neq 0$, and it sends $x\in X$ to the distribution on $Z$ given by $z\mapsto \bigoplus_{k\in \mathcal{D}(Z)}\bigoplus_{y\in \mathcal{D}(Y)} f(x)(y)\cdot k(z)$ otherwise. A simple verification shows that the composition of the spans associated to $f$ and $g$ is given by the span associated to $f;g$. Hence, $F$ is a functor. It is also an equivalence of categories, since it is full and faithful, and essentially surjective on objects.
\end{proof}

Hence, we proved that $\mathrm{Partial}(\mathrm{FinStoch})$ is equivalent to the Kleisli category ${\mathbf{Par}_{\mathcal{D}_\mathbf{Par}}}$ and also to the Kleisli category $\mathbf{Set}_{\mathcal{D}+1}$.
We now prove two properties about the monad $\mathcal{D}(-)+1:\mathbf{Set}\to \mathbf{Set}$ and $\mathcal{D}_\mathbf{Par}:\mathbf{Par}\to \mathbf{Par}$. In order to do that, we first recall the monoidal structure of these functors.\\

 $\mathcal{D}(-)+1:\mathbf{Set}\to\mathbf{Set}$ is a symmetric monoidal functor, which sends $X$ to $\mathcal{D}(X)+1$ and sends a function $f:X\to Y$ to the function $\mathcal{D}(f)+1: \mathcal{D}(X)+1\to \mathcal{D}(Y)+1$ which sends $\bullet$ to $\bullet$ and $h\in \mathcal{D}(X)$ to $\mathcal{D}(f)(h)\in \mathcal{D}(Y)$, the distribution on $Y$ given by $y\mapsto \bigoplus_{x\in f^{-1}(y)} h(x)$. 
 The monoidal structure on $\mathbf{Set}$ is given by the cartesian product, and $\overline{\psi}: (\mathcal{D}(X)+1)\times (\mathcal{D}(Y)+1)\to \mathcal{D}(X\times Y)+1$ is given by the following
\begin{itemize}
	\item $\overline{\psi}(\bullet, k)= \bullet=\overline{\psi}(h, \bullet)$;
	\item $\overline{\psi}(h, k)= \psi(h, k)$, where $\psi: \mathcal{D}(X)\times \mathcal{D}(Y)\to \mathcal{D}(X\times Y)$ is the monoidal structure of $\mathcal{D}$, which sends $(h,k)$ to the distribution on $X\times Y$ given by $(x,y)\mapsto h(x)\cdot k(y)$.
\end{itemize}
The arrow\ 
$\overline{\psi}_0: 1\to \mathcal{D}(1)+1$  sends $\bullet$ to $\delta_\bullet\in \mathcal{D}(1)$.\\

The functor $\mathcal{D}_\mathbf{Par}:\mathbf{Par}\to\mathbf{Par}$ sends $X$ to $\mathcal{D}(X)$ and sends a partial function from $X$ to $Y$, represented by an arrow  $f:X\to Y+1$, to the partial function from $\mathcal{D}(X)$ to $\mathcal{D}(Y)$, represented by the arrow $\mathcal{D}_\mathbf{Par}(f): \mathcal{D}(X)\to \mathcal{D}(Y)+1$, which is given by $\mathcal{D}(f):\mathcal{D}(X)\to \mathcal{D}(Y+1)$ post-composed with $\gamma: \mathcal{D}(Y+1)\to \mathcal{D}(Y)+1$. Explicitly, $\mathcal{D}_\mathbf{Par}(f)$ sends a distribution $h\in \mathcal{D}(X)$ to $\bullet$ if there exists $x\in f^{-1}(\mathrm{inr}(\bullet))$ such that $h(x)\neq 0$, and it sends $h$ to the distribution on $Y$ given by $y\mapsto \bigoplus_{x\in f^{-1}(\mathrm{inl}(y))} h(x)$ otherwise. 

The monoidal structure on $\mathbf{Par}$ is given by the cartesian product, and $\tilde{\psi}_{X,Y}: \mathcal{D}(X)\times \mathcal{D}(Y)\to \mathcal{D}(X\times Y) + 1$ is given by the function  $\psi_{X,Y}$ followed by the inclusion $\mathcal{D}(X\times Y)\to \mathcal{D}(X\times Y)+1$. The arrow $\tilde{\psi}_0: 1\to \mathcal{D}(1)+1$ is given by $\psi_0$ followed by the inclusion $\mathcal{D}(1)\to \mathcal{D}(1)+1$.

\begin{theorem}\label{lemma:D+1 su Set domain}
	The monad $\mathcal{D}(-)+1:\mathbf{Set}\to\mathbf{Set}$ is domain preserving.
\end{theorem}
\begin{proof}
First observe that the function $\mathcal{D}(id_X \otimes !_X) + 1: \mathcal{D}(X\times X)+1\to \mathcal{D}(X \times 1)+1$ sends $\bullet$ to $\bullet$, and $\mathcal{D}(id_X \otimes !_X)(h)+1$ is the distribution on $X\times 1$  which sends $(x,\bullet)$ to $\bigoplus_{x'\in X} h(x,x')$, for every $h\in \mathcal{D}(X\times X)$, and $x\in X$.  Finally, $\mathcal{D}(\rho_X)+1: \mathcal{D}(X\times 1)+1\to \mathcal{D}(X)+1$ sends $\bullet$ to $\bullet$, and $\mathcal{D}(\rho_X)(h)+1$ is the distribution on $X$ which sends $x$ to $h(x, \bullet)$, for every $h\in \mathcal{D}(X\times 1)$, and $x\in X$.

Now we can prove that $\mathcal{D}(-)+1$ is a domain preserving monad. Indeed, the composition
\[ \nabla_{\mathcal{D}(X)+1}; \overline{\psi}_{X,X}; (\mathcal{D}(id_X \otimes !_X)+1); (\mathcal{D}(\rho_X)+1) \]
sends $\bullet$ to $\bullet$, and sends a distribution $h\in \mathcal{D}(X)$ to the distribution which maps $x$ to $h(x)\cdot\bigoplus_{x'\in X}h(x')$. Since $h\in \mathcal{D}(X)$, the latter product is equal to $h(x)$. Hence, the composition above is the identity on $\mathcal{D}(X)+1$, and thus $\mathcal{D}(-)+1$ is a domain preserving monad.
\end{proof}

\begin{proposition}\label{lemma:D+1 non affine}
	The monad $\mathcal{D}(-)+1:\mathbf{Set}\to\mathbf{Set}$  is not affine.
\end{proposition}
\noindent
\begin{minipage}[l]{.60\linewidth}
\begin{proof}
To see that $\mathcal{D}(-)+1$ is not affine, observe that in the diagram aside
$1$ is sent to $1$ by the top horizontal arrow, while it is sent to $\delta_\bullet\in \mathcal{D}(1)$ 
by the composition of the other two arrows. Hence, the diagram aside does not commute and 
$\mathcal{D}(-)+1$ is not affine.
\end{proof}
\end{minipage}
\hspace{.4cm}
\begin{minipage}[r]{.35\linewidth}
\begin{tikzcd}[column sep=tiny]
	\mathcal{D}(X)+1 && {\mathcal{D}(1)+1} \\
& 1
\arrow["{\mathcal{D}(!_X)+1}", from=1-1, to=1-3]
\arrow["{!_{\mathcal{D}(X)+1}}"', from=1-1, to=2-2]
\arrow["{\tilde{\psi}_0}"', from=2-2, to=1-3]
\end{tikzcd}
\end{minipage}

\begin{theorem}\label{lemma:D su par affine}
The monad $\mathcal{D}_\mathbf{Par}:\mathbf{Par}\to\mathbf{Par}$ is affine.
\end{theorem}
\noindent
\begin{minipage}[l]{.60\linewidth}
\begin{proof}
Recall that $!_X:X\to 1$ in $\mathbf{Par}$ is the function sending any element of $X$ to the element $\bullet$ of $1$.
Now consider the diagram aside in $\mathbf{Par}$. Since $\mathcal{D}_\mathbf{Par}(!_X): \mathcal{D}_\mathbf{Par}(X)\to \mathcal{D}_\mathbf{Par}(1)+1$ sends a distribution $h\in \mathcal{D}_\mathbf{Par}(X)$ to $\delta_\bullet\in \mathcal{D}_\mathbf{Par}(1)$, the diagram aside commutes and hence $\mathcal{D}_\mathbf{Par}$ is affine.
\end{proof}
\end{minipage}
\hspace{.4cm}
\begin{minipage}[r]{.35\linewidth}
\begin{tikzcd}
	\mathcal{D}_\mathbf{Par}(X) && {\mathcal{D}_\mathbf{Par}(1)} \\
& 1
\arrow["{\mathcal{D}_\mathbf{Par}(!_X)}", from=1-1, to=1-3]
\arrow["{!_{\mathcal{D}_\mathbf{Par}(X)}}"', from=1-1, to=2-2]
\arrow["{\tilde{\psi}_0}"', from=2-2, to=1-3]
\end{tikzcd}
\end{minipage}

\vspace{.1cm}
Observe that the monad $\mathcal{D}_\mathbf{Par}$ is an example of an affine monad in the sense of \cite{cioffogadduccitrottataxonomy} on a restriction category. This notion generalises the notion of affine monad for cartesian categories introduced in \cite{Jacobs1994}.

\begin{corollary} 
	The category $\mathrm{Partial}(\mathrm{FinStoch})$ is a domain category.
\end{corollary}
\begin{proof}
	It follows by Lemma \ref{lemma:D su par affine} and Theorem \ref{thm: restriction + special implies kl in PR}, by observing that every affine monad is domain preserving by Corollary \ref{cor:thm}. It also follows by Lemma \ref{lemma:D+1 su Set domain} and Theorem \ref{thm: restriction + special implies kl in PR}.
\end{proof}

\section{Enrichment and oplax cartesianity}
\label{oplax}
Now we focus on the canonical poset-enrichment of domain categories~\cite{lorenz2023causalmodelsstringdiagrams}, and on how it interacts with the monoidal structure,  see also \cite{empirical,Di_Lavore_2025order}.

%


\begin{definition}
	Let $\mC$ be a GSM category and $f,g:X\to Y$. We say that $g$ \emph{extends} $f$, in symbol $g \sqsupseteq f$, if
	\begin{center}
		\tikzfig{defenrichment}
	\end{center}
\end{definition}

We now recall a result that is basically in~\cite[Lemma 98]{lorenz2023causalmodelsstringdiagrams}.
%

\begin{proposition}\label{prop: poset homset}
Let $\mC$ be a domain category. Then $\sqsupseteq$ is a partial order on the hom-sets of $\mC$. 
\end{proposition}


The enrichment is not preserved by the monoidal structure, yet
the result below holds~\cite[Prop. 2.16]{empirical}.

\begin{proposition}\label{prop:poset enrichment}
	Let $\mC$ be a positive domain category. Then it is a poset enriched monoidal category with respect to the partial order $\sqsupseteq$.
	\end{proposition}

\begin{definition}\label{def oplax cartesian cat}
A \emph{posetal oplax cartesian category}  is a poset-enriched GSM category $\mC$ such that every arrow is \emph{oplax copyable} and \emph{oplax discardable}, 
i.e. the following inequalities hold for every arrow $\freccia{X}{f}{Y}$
	\ctikzfig{oplax_cart_def}
\end{definition}
\begin{lemma}\label{lemma:oplax implies domain}
	Let $\mC$ be a posetal oplax cartesian category. Then it is a domain category.
\end{lemma}
\begin{proof}
	Consider the following derivation, where the first inequality follows from the oplax copyability of $f$, and the second one follows from the oplax discardability of $f$
 	\begin{center}
		\tikzfig{oplaxcartimpliesdomain}
	\end{center}
Hence, the poset-enrichment implies the domain equation.
\end{proof}

We now explore how the canonical order interacts with oplax cartesianity.

\begin{proposition}
Let $\mC$ be a domain category. Then with respect to the partial order $\sqsupseteq$
\begin{itemize}
	\item every arrow is oplax discardable;
	\item if an arrow is oplax copyable then it is copyable.
	 \end{itemize}
\end{proposition}
\begin{proof}
	The first property is an immediate consequence of the definition of $\sqsupseteq$.
	
	As for the second, recall for an arrow $f:X\to Y$ to be oplax copyable with respect to the partial order $\sqsupseteq$
	means that the equivalence below holds
	\begin{center}
		\scalebox{0.8}{\tikzfig{opositivedomainimplicaoplax}}
	\end{center}
	However, by associativity and the domain equation we get
	\begin{center}
\scalebox{0.8}{\tikzfig{opositivedomainimplicaoplax2}}
\end{center}
	hence $f$ is actually copyable.
	\end{proof}
	
The next result states that if the partial order $\sqsupseteq$ gives rise to a posetal oplax category, then it collapses.

\begin{corollary}
Let $\mC$ be a domain category. If it is a posetal oplax category with respect to the partial order $\sqsupseteq$,
then it is a cartesian restriction category.
\end{corollary}

Thus, it mimics the natural order on partial functions, as further strengthened by next corollary.

	\begin{corollary}
		Let $\mC$ be a Markov category. If it is a posetal oplax category with respect to the partial order $\sqsupseteq$, then it is a cartesian monoidal category.
	\end{corollary}

	The simple observation below shows that the relation $\sqsupseteq$ is contained in any other poset-enrichment and is compatible with the monoidal structure in the sense of Definition~\ref{def oplax cartesian cat}.

	\begin{proposition}
		Let $\mC$ be an oplax cartesian category with respect to a partial order $\leq$. Then for every pair of parallel arrows $f,g:X\to Y$, if $f\sqsubseteq g$ then $f\leq g$.
	\end{proposition}
	\begin{proof}
		By Lemma \ref{lemma:oplax implies domain} we have that $\mC$ is a domain category, hence Proposition \ref{prop: poset homset} implies that $\sqsupseteq$ defines a partial order on $\mC(X,Y)$. The following derivation implies the statement
		\begin{center}
			\tikzfig{minorder} 
		\end{center}
		where the middle inequality follows from oplax discardability.
		\end{proof}
	 Thus, for positive oplax cartesian categories the order $\sqsubseteq$ is the minimal poset-enrichment.

%

\section{Conclusions and further works}

The present work builds on the results presented in~\cite{cioffogadduccitrottataxonomy}, 
where it was proposed a taxonomy for some variants of symmetric monoidal categories, 
such as Markov and restriction categories, which in recent years had been introduced 
with computational and graphical aims.
The survey was further complemented by a series of results concerning the 
structures of Kleisli categories, putting some order also on those
variants of (order-enriched) symmetric monoidal monads proposed in the literature.
This paper extends the taxonomy, including the already 
known weakly Markov categories and the new domain and 
mass categories, and establishing a connection with the corresponding
monads and their Kleisli categories.

Domain and mass categories are instances of gs-monoidal categories, hence part of the taxonomy,
yet they are intermediate between Markov and restriction categories, making more precise
the underlying algebraic structure in terms of special and Hopf monoid objects.
Indeed, an outcome of such description is the characterisation of Markov categories via weakly 
Markov and mass categories.

Our work thus extends \cite{cioffogadduccitrottataxonomy},  
investigating as promised there ``alternative notions of Markov categories~\cite{LavoreR23,FritzGPT23}  
and affine monads~\cite{Jacobs16,FritzGPT23}, aimed 
at distilling a categorical presentation of probability theory, see e.g.~\cite{Jacobs18}.''
We foresee a few research threads we plan to explore. On the restriction categories side, 
the connections between the axiom (R.4) and positivity, whose relationship is hinted at in 
Proposition~\ref{positive as restriction}, and between our monads and the classifying monads
of~\cite{Cockett03}.
On the categorical probability side, we plan to investigate if
partial Markov categories~\cite{LavoreR23,abs-2502-03477}
fit into our taxonomy, as well as enlarging it 
towards traced gs-monoidal categories~\cite{Joyal_tracedcategories,CorradiniGadducci99b}, 
and to explore completeness theorems for functorial semantics, see~\cite{FritzGCT23}.


\bibliography{references}
\bibliographystyle{./entics}

\appendix

 \section{Lax monoidal functors and commutative monads}
\label{sec:lax_app}

This section recalls the definitions of lax monoidal functor~\cite{aguiar2010} and of symmetric monoidal monad ~\cite{Kock70,Kock72}.
Throughout, $\mC$ and $\mD$ are symmetric monoidal categories with tensor functor $\otimes$ and monoidal unit $I$, and we assume that 
$\otimes$ strictly associates without loss of generality to keep the diagrams simple.
Left and right unitors are denoted by $\lambda$ and $\rho$, respectively, and braidings by $\gamma$.

\begin{definition}\label{def:lax monoidal functor}
Let $\mC$ and ${\mD}$ be monoidal categories. 
A functor $\freccia{\mC}{F}{\mD}$ is \emph{lax monoidal} if it is equipped with a natural transformation 
\[\freccia{\otimes \circ \, (F\times F)}{\psi}{F\circ \otimes}\]
and an arrow $\freccia{I}{\psi_0}{F(I)}$ such that the following associativity and unitality diagrams commute
\[\begin{tikzcd}[column sep=2ex, row sep=2ex]
	{F(A)\otimes F(B)\otimes F(C)} &&& {F(A)\otimes F(B\otimes C)} \\
	\\
	{F(A\otimes B)\otimes F(C)} &&& {F(A\otimes B\otimes C)}
	\arrow["{\id\otimes \,\psi_{B,C}}", from=1-1, to=1-4]
	\arrow["{\psi_{A,B}\otimes {}\id}"', from=1-1, to=3-1]
	\arrow["{\psi_{A\otimes B,C}}"', from=3-1, to=3-4]
	\arrow["{\psi_{A,B\otimes C}}", from=1-4, to=3-4]
\end{tikzcd}
\hspace{.2cm}
\begin{tikzcd}[column sep=0.8ex, row sep=2ex]
	{I\otimes F(A)} && F(A) && {F(A)\otimes  I} && F(A) \\
	\\
	{F(I)\otimes F(A)} && {F(I\otimes A)} && {F(A)\otimes F(I)} && {F(A\otimes  I)}
	\arrow["{\psi_{I,A}}"', from=3-1, to=3-3]
	\arrow["{F(\lambda_A)}", from=1-3, to=3-3]
	\arrow["{\psi_0\otimes {}\id}"', from=1-1, to=3-1]
	\arrow["{\lambda_{FA}}"', from=1-3, to=1-1]
	\arrow["{\rho_{FA}}"', from=1-7, to=1-5]
	\arrow["{\id\otimes \psi_0}"', from=1-5, to=3-5]
	\arrow["{\psi_{A,I}}"', from=3-5, to=3-7]
	\arrow["{F(\rho_A)}", from=1-7, to=3-7]
\end{tikzcd}\]


If $\mC$ and ${\mD}$ are symmetric monoidal categories, then
$F$ is a \emph{lax symmetric monoidal} functor if also the following diagram commutes

\begin{center}
\begin{tikzcd}
	{F(A)\otimes F(B)} && {F(B)\otimes F(A)} \\
	{F(A\otimes B)} && {F(B\otimes A)}
	\arrow["{\gamma^{\mathcal{D}}_{FA,FB}}", from=1-1, to=1-3]
	\arrow["{\psi_{A,B}}", from=1-1, to=2-1]
	\arrow["{\psi_ {B,A}}"', from=1-3, to=2-3]
	\arrow["{F(\gamma^{\mathcal{C}}_{A,B})}"', from=2-1, to=2-3]
\end{tikzcd}
\end{center}
\end{definition}

For example, if $\mC$ is the terminal monoidal category with only one object $I$, then $F$ is a monoid in $\mD$.


\begin{definition}
	A \emph{monoidal transformation} between lax monoidal functors $\freccia{(F,\psi_0,\psi)}{\epsilon}{(F',\psi'_0,\psi')}:\mC\to\mD$ is a family of arrows $\epsilon_X:F(X)\to F'(X)$, for $X\in\mC$, satisfying
\[
\begin{tikzcd}[column sep=tiny]
	{F(X)\otimes F(Y)} && {F'(X)\otimes F'(Y)} && I && {F(I)} \\
	{F(X\otimes Y)} && {F'(X\otimes Y)} &&& {F'(I)}
	\arrow["{\epsilon_X\otimes\epsilon_Y}", from=1-1, to=1-3]
	\arrow["\psi"', from=1-1, to=2-1]
	\arrow["{\psi'}", from=1-3, to=2-3]
	\arrow["{\psi_0}", from=1-5, to=1-7]
	\arrow["{\psi'_0}"', from=1-5, to=2-6]
	\arrow["{\epsilon_I}", from=1-7, to=2-6]
	\arrow["{\epsilon_{X\otimes Y}}"', from=2-1, to=2-3]
\end{tikzcd}\]
If $\epsilon$ is a natural transformation between the functors $F$ and $F'$, it is called \emph{monoidal natural transformation}.
\end{definition}

%

\begin{definition}
\label{dfn: symmetric monoidal monad}
	Let $\mathcal{C}$ be a symmetric monoidal category and $T : \mathcal{C} \to \mathcal{C}$ be a monad carrying the structure of a lax symmetric monoidal functor with structure maps $\freccia{\otimes \circ \, (T\times T)}{c}{T\circ \otimes}$ and $u : I \to TI$. Then $T$ is a \emph{symmetric monoidal monad} if $u = \eta_I$ and the following two diagrams commute
\[\begin{tikzcd}
& X \otimes Y \arrow[dl, "\eta \otimes \eta"'] \arrow[dr, "\eta"] & \\
TX \otimes TY \arrow[rr, "c"'] & & T(X \otimes Y)
\end{tikzcd}
\quad 
\begin{tikzcd}
TTX \otimes TTY \arrow[r, "c"] \arrow[d, "\mu \otimes \mu"'] & T(TX \otimes TY) \arrow[r, "Tc"] & TT(X \otimes Y) \arrow[d, "\mu"] \\
TX \otimes TX \arrow[rr, "c"'] & & T(X \otimes Y)
\end{tikzcd}
\]
\end{definition}

\begin{remark}
\label{rem:comm_vs_symmon}
	In a symmetric monoidal category,  symmetric monoidal monads are equivalent to commutative monads, see \cite[Theorem~2.3]{Kock72} and \cite[Theorem~3.2]{Kock70}. Definition \ref{dfn: symmetric monoidal monad} corresponds to that of monad internal to the 2-category of symmetric monoidal categories, lax functors and monoidal natural transformations. The commutativity of the diagrams  
	say that $\mu$ and $\eta$ are monoidal natural transformations.
\end{remark}

\begin{definition}
A symmetric monoidal monad $T : \mathcal{C} \to \mathcal{C}$ is affine/relevant/cartesian monoidal if the underlying functor is so.
\end{definition}
 \end{document}